\documentclass[aps,11pt,singlecolumn,prx,longbibliography,nofootinbib,superscriptaddress]{revtex4-2}

\usepackage{config}
\usepackage{comment}
\usepackage{times}
\usepackage{titlesec}

\titleformat{\paragraph}[runin]
  {\normalfont\bfseries}{\theparagraph.}{0.5em}{}[\;\;]
\titlespacing*{\paragraph}
  {0pt}{1.5ex plus 0.5ex minus 0.2ex}{0.5em}

\begin{document}

\title{Optimal query complexity for fractional quantum evolution}

\author{Anthony Yuezhang Liu}
\affiliation{Centre for Quantum Technologies, National University of Singapore, Singapore}
\affiliation{Department of Mathematics, National University of Singapore, Singapore}

\author{Adam Wesołowski}
\affiliation{Centre for Quantum Technologies, National University of Singapore, Singapore}
\affiliation{Department of Computer Science, Royal Holloway University of London, United Kingdom}
\affiliation{Department of Computer Science, University of Oxford, United Kingdom}

\author{Jayne Thompson}
\affiliation{College of Computing and Data Science, Nanyang Technological University, Singapore}
\affiliation{Nanyang Quantum Hub, School of Physical and Mathematical Sciences, Nanyang Technological University, Singapore}
\affiliation{Centre for Quantum Technologies, Nanyang Technological University, Singapore}

\author{Mile Gu}
\affiliation{Nanyang Quantum Hub, School of Physical and Mathematical Sciences, Nanyang Technological University, Singapore}
\affiliation{Centre for Quantum Technologies, Nanyang Technological University, Singapore}

\author{Lirand\"e Pira}
\email{lpira@nus.edu.sg}
\affiliation{Centre for Quantum Technologies, National University of Singapore, Singapore}

\begin{abstract}
Given oracle access to an unknown unitary \(U=e^{iH}\), the fractional query problem asks how many queries are required to implement a noninteger power \(U^t=e^{itH}\), \(0<t<1\), when the spectrum is separated from the branch cut by a gap \(\delta\). Quantum singular value transformation gives an upper bound of \(O\!\left(\frac{1}{\delta}\log\frac{1}{\varepsilon}\right)\) queries for approximation error \(\varepsilon\). We prove a matching lower bound for arbitrary query algorithms. Our argument reduces any \(N\)-query circuit to the approximation of \(e^{it\theta}\) by a trigonometric polynomial with degree bounded by \(O(N)\), together with Remez inequality. This allows us to establish the lower bound of \(\Omega_\tau\!\left(\frac{1}{\delta}\log\frac{1}{\varepsilon}\right)\). Consequently, the optimal query complexity for fractional query problem is \(\Theta_{\tau}\!\left(\frac{1}{\delta}\log\frac{1}{\varepsilon}\right)\), showing that the known QSVT construction is asymptotically optimal. We also give an alternative lower bound proof based on constructing a linear functional that annihilates the approximant space, yielding a \(\Omega_{\tau}\!\left(\log\frac{1}{\varepsilon}\right)\) bound uniform to \(\delta\).
\end{abstract}

\maketitle
\vspace{-0.3cm}

\section{Introduction}
\label{sec:intro}
 
Given a quantum circuit that implements an unknown unitary $U$, a core question is what transformations of $U$ can be implemented efficiently by an oracle query algorithm, and at what complexity. A line of work on various transformations of unknown unitaries, such as inversion, transposition, complex conjugation, and controlization, has recently produced query complexity theorems~\cite{AFCB14,QDSSM19a,QDSSM19b,YSM23,GST24,OYM25,CYZ26}. 
 
Perhaps the simplest instance of this question is the \emph{fractional query problem}. Given controlled access to $U = e^{iH}$, implement $U^t = e^{itH}$ for some $0<t<1$. Implementing integer powers are trivial, since one simply hit the oracle button for several times. However, implementing a noninteger power is much more difficult. The algorithm must extract spectral information about $U$ and transform it coherently by a function that is not generated by trivially composing. The case $t = \frac{1}{2}$, namely whether an efficient implementation of $U$ yields an efficient implementation of $\sqrt{U}$, was listed by Scott Aaronson among his ``ten most annoying questions in quantum computing''~\cite{Aar06}. 

Physically, fractional query problem asks for an efficient way to approximate continuous time quantum evolution with oracle algorithms. This viewpoint is most prominently used in the fractional query model in simulation of continuous time query algorithms and sparse Hamiltonians~\cite{CGMSY09,BCCKS14}.

Fractional powers also occupy a distinguished place among transformations of unknown unitaries. For instance, controlled access to $U$ cannot be synthesized from $U$ alone. A single query does not suffice~\cite{AFCB14}, and topological arguments show that no finite number of queries to $U$ and $U^{\dagger}$ does either, even approximately, because the global phase of the oracle is unobservable~\cite{GST24}. Fractional powers break this barrier. Dong, Nakayama, Soeda, and Murao showed that controlled-$U$, up to a phase, can be implemented exactly from queries to $d$-th roots $U^{1/d}$ for $U\in\mathrm{U}(d)$~\cite{DNSM19}. Kitaev's controlled-swap construction gives a related route when an eigenvector of $U$ is available~\cite{Kit95}, and it underlies the average case fractional power algorithm of~\cite{SMM09}. Fractional queries also mark the boundary between discrete-time and continuous-time access to a Hamiltonian. When the evolution $e^{-isH}$ is available for every $s$, rescaling time is free, and transformations of the dynamics can be studied directly as higher-order operations~\cite{OKSM24}. When only a single unitary $U=e^{iH}$ is available, the algorithm must generate the continuous time scaling by itself. One could instead try to learn $U$ first. Query-optimal unitary estimation requires $\Theta(d^{2}/\eps)$ queries to reach diamond norm error $\eps$~\cite{HaahKothariODonnellTang2023}, however, a cost that grows with the dimension and polynomially in $1/\eps$. The relevant benchmark is therefore coherent algorithms whose cost is independent of $d$ and polylogarithmic in $1/\eps$, as achieved by QSVT~\cite{GSLW19}.

In the language of quantum signal processing, the fractional query problem is about implementing spectral transformation. Writing the eigenvalues of $U$ as $e^{i\theta}$, the task is to apply the scalar map $e^{i\theta}\mapsto e^{it\theta}$ to every eigenphase. Quantum signal processing (QSP)~\cite{LYC16,LowChuang2017Optimal}, qubitization~\cite{LC19}, the quantum singular value transformation (QSVT)~\cite{GSLW19}, and generalized quantum signal processing (GQSP)~\cite{MW24} provide exactly this capability, and reduce Hamiltonian simulation, linear system solving, amplitude amplification, and phase estimation to a single task, namely, designing a polynomial that approximates a target function on the spectrum of the input~\cite{MRTC21,DLT22}. For most of these tasks, qubitization is needed first, to convert a block encoded operator into a unitary whose eigenphases carry its spectrum. In the fractional query problem the oracle is already such a unitary, and only the polynomial design remains. An $N$-query circuit realizes polynomials of degree at most $N$ in $e^{i\theta}$, but the question is how well they approximate $e^{it\theta}$. However, without further assumptions about the oracles, the fractional query implementation is impossible. The map $U \mapsto U^t$ is discontinuous over complete torus, i.e. $e^{i\theta}\mapsto e^{it\theta}$, $\theta\in[0,2\pi)$ jumps at $\theta=0$, and topological arguments show that no finite query algorithm implements it, even approximately, uniformly for all unitaries~\cite{GST24,SMM09}. The remedy~\cite{SMM09,GSLW19,MW24} is to promise that every eigenphase of $U$ stays a distance $\delta$ away from the \textit{branch cut}, i.e.\ $\operatorname{spec}(H)\subset[\delta,2\pi)$. This could be understood operationally as a requirement to be able to topologically differentiate the unitary of interest from an identity operation. Therefore, uniform approximation is possible only on an interval that excludes a gap $\delta$ around the cut, at a query complexity governed jointly by $\delta$ and the target error $\varepsilon$. Upper bounds for the fractional query problem have been obtained within these frameworks, first via QSVT with $O(\frac1\delta\log\frac1\varepsilon)$ queries~\cite{GSLW19}, and then via GQSP~\cite{MW24}.

In the past literature, Sheridan, Maslov, and Mosca~\cite{SMM09} gave an algorithm using $O\!\left(\max\{\frac{1}{\delta},\frac{1}{\eps}\}\log \frac{1}{\eps}\right)$ queries and showed that $\Omega(\frac{1}{\delta})$ queries are necessary. Gily\'en, Su, Low, and Wiebe~\cite{GSLW19} improved the error dependence, obtaining $O\!\left(\frac{1}{\delta}\log\frac{1}{\eps}\right)$ queries via QSVT. Subsequently, Motlagh and Wiebe~\cite[Theorem~10]{MW24} stated an upper bound of $O\!\left(\frac{1}{\delta}+\log \frac{1}{\eps}\right)$ using GQSP. In the same paper, the optimality with respect to the error \(\eps\), as well as the optimal interaction between two parameters, are identified as open questions. Therefore, we ask:
\begin{center}
  \emph{What is the optimal query complexity of implementing a fractional quantum evolution?}
\end{center}

In our work we resolve this question by proving a lower bound that matches the upper bound of~\cite{GSLW19}.

\subsection{Main results}

\paragraph*{Our contributions.}
We prove that for every $t\in(0,1)$ bounded away from $0$ and $1$ by \(\tau\), every query algorithm that implements $U^t$, to precision $\varepsilon$ on all oracles with \textit{spectral gap} $\delta$, takes at least $\Omega_\tau\!\left(\frac{1}{\delta}\log\frac{1}{\eps}\right)$ queries. Thereby, our result establishes the optimal query complexity of fractional queries, and resolves the open question stated in~\cite{MW24}. An immediate corollary of our lower bound is that the QSVT construction is optimal. Our main argument reduces the query complexity problem to a scalar polynomial approximation problem by constructing a hard subfamily, and we exploit the \emph{periodic mismatch obstruction} intrinsic to the problem, which by sharp Remez inequality (see~\Cref{thm:sharp-remez}) converts into a query lower bound. The formal proof can be found in~\Cref{sec:lower_bound_proof}. We also give a different lower bound proof by constructing a \textit{frequency annihilator} functional that annihilates the approximant space, yielding a \(\Omega_{\tau}\!\left(\log\frac{1}{\varepsilon}\right)\) bound uniform to \(\delta\).

Matching lower bounds are uncommon in this area. The polynomial method~\cite{Beals2001Quantum} shows that the amplitudes of any $N$-query algorithms are polynomials of degree at most $N$ in the entries of the oracle. With controlled-$U^{\pm1}$ queries, these are the entries of $U$ and their complex conjugates. Tight bounds of this kind are known for a few other central tasks, including Hamiltonian simulation~\cite{BACS07,BCK15}, phase estimation~\cite{MdW23}, and certain classes of matrix functions~\cite{MS24}. For fractional query problem, the only known lower bound is the inverse \textit{spectral gap} bound $\Omega(\frac{1}{\delta})$~\cite{SMM09}. It comes from a hybrid argument that is blind to the approximation error, and whether the $\log(\frac{1}{\eps})$ term in the GQSP upper bound is necessary was left open~\cite{MW24}.
 
\paragraph*{Consequences for the additive upper bound of generalized quantum signal processing.}
Our main theorem shows that no algorithm can achieve  $O\!\left(\frac{1}{\delta}+\log\frac{1}{\eps}\right)$
queries. The construction of~\cite[Theorem~10]{MW24}, which claimed that upper bound, is correct but the the following complexity analysis shows that it matches our lower bound. The construction first performs phase
estimation to precision $\delta$ with $O(\frac{1}{\delta})$ queries in order to decide on which side of the
branch cut each eigenphase lies. With this query budget the probability of misclassifying an
eigenphase remains bounded below by a constant. Reducing it to $\varepsilon$ requires an additional
factor of $\log\frac{1}{\eps}$, which is known to be necessary for phase estimation~\cite{MdW23}.
With this factor restored, the construction of~\cite{MW24} recovers the
$O\!\left(\frac{1}{\delta}\log\frac{1}{\eps}\right)$ scaling of QSVT.

\subsection{Proof overview}
\label{sec:intro-overview}
 
\paragraph*{Reduction to scalar polynomial approximation.}
Since the lower bound must hold uniformly over the oracle family, it suffices to construct a hard subfamily. We constructed the two-eigenvalue oracles $U_\theta = e^{i\delta}(I-P) + e^{i\theta}P$ with $\theta\in[\delta,2\pi)$ where $P$ is a rank-one projector. Each controlled query is affine in $z=e^{i\theta}$ or $z^{-1}$, so an $N$-query circuit is an operator valued Laurent polynomial of degree at most $N$ in $z$. The unknown global phase is removed by taking the product of the matrix elements on the two eigenvectors, one of them conjugated, which is a matrix element of the induced channel. This yields a trigonometric polynomial $q$ in \(\theta\) of degree at most $2N$ with $|q(\theta)-e^{it\theta}|\le 2\varepsilon$ on $[\delta,2\pi)$. This step are in the same lineage of idea in various polynomial methods~\cite{Beals2001Quantum,BunKothariThaler2020Polynomial}, adapted to the phase insensitive distance of~\cite{HaahKothariODonnellTang2023}.
 
\paragraph*{Periodic mismatch obstruction.}
The polynomial $q$ is $2\pi$-periodic, whereas $e^{it\theta}$ is not. At the branch cut the target jumps by $2\sin(\pi t)$. To exploit this we compare $q$ with a translation of itself. The function $r(\theta) = q(\theta+\delta) - e^{it\delta}q(\theta)$ is again a trigonometric polynomial of degree at most $2N$. Because $e^{it\theta}$ is a character, $r$ is at most $4\varepsilon$ in magnitude on a set of measure $2\pi-2\delta$. Periodicity nevertheless forces $|r(0)| \ge 2\sin(\pi t)-4\varepsilon$. A trigonometric polynomial that is small on most of the circle cannot be large elsewhere unless its degree is large. The sharp Remez inequality of Tikhonov and Yuditskii~\cite{TikhonovYuditskii2020SharpRemez} makes this precise with the optimal constant $T_{4N}(\sec(\delta/2))\le e^{4N\delta}$. The periodic mismatch obstruction was introduced in~\cite{LiuPira2026} to bound the depth-error tradeoff of fixed data re-uploading circuits. Here, it is used to bound the query complexity.

\paragraph*{Organization.}
Section~\ref{sec:problem} formalizes the fractional query problem, the query model and the success criterion, and sets the notation. Section~\ref{sec:lower} gives the reduction, the joint lower bound, and the frequency annihilator bound, and combines them into the optimality theorem. Section~\ref{sec:implication} discusses implications and open problems. Proofs can be found in
the appendices.

\section{Fractional queries}
\label{sec:problem}

\subsection*{Notation}

We write $\mathbb N=\{0,1,\ldots\}$ and $\mathbb T=\{z\in\mathbb C:|z|=1\}$. All logarithms are natural. For an operator $X$, $X^\dagger$, $\spec(X)$, and $\Tr(X)$ denote its adjoint, spectrum, and trace, respectively. We use $\|\cdot\|_2$ for the Hilbert-space norm, $\|\cdot\|_{\mathrm{op}}$ for the operator norm, $\|\cdot\|_1$ for the trace norm, and $\|\cdot\|_\diamond$ for the diamond norm of linear maps. The identity is denoted by $Id$ or $\mathbbm{1}$; register labels and identity tensor factors are omitted when clear. Induced channel \(\cE_V\) of a unitary \(V\) is defined by \(\mathcal E_V(\rho)=V\rho V^\dagger\). 

For a compact interval $I\subset\mathbb R$, $C(I)$ denotes the complex-valued continuous functions on $I$, equipped with $\|f\|_{\infty,I}:=\sup_{x\in I}|f(x)|$. We omit $I$ when the domain is clear and write $\|\phi\|$ for the corresponding dual norm of a bounded linear functional. For sets, $|\cdot|$ denotes cardinality or Lebesgue measure, as appropriate. We use $O$, $\Omega$, and $\Theta$ in their standard senses in complexity theory; subscripts indicate the only parameters on which implicit constants may depend.

\subsection{Query model}

We consider a query model that allows access to controlled-\(U\), controlled-\(U^{\dagger}\), and arbitrary additional ancillary workspace. It is equivalent to a quantum comb as defined in \cite{Chiribella2008QuantumComb}. The controlled query setting follows the conventions and of \cite{SMM09,GSLW19,MW24}.

\begin{definition}[\(N\)-query circuit]
\label{def:model}

Let \(C\) be a control qubit and let the controlled query be
\begin{equation}
  \mathcal O_U^{(\sigma)}
  =
  |0\rangle\!\langle0|_C\otimes U^{\sigma}
  +
  |1\rangle\!\langle1|_C\otimes I_S,
  \label{eq:oracle}
\end{equation}
with \(\sigma\in\{-1,+1\}\), extended by the identity on all additional registers. An \(N\)-query circuit has the form
\begin{equation}
  \mathcal A_U
  =
  W_N\mathcal O_U^{(\sigma_N)} W_{N-1}\cdots W_1\mathcal O_U^{(\sigma_1)} W_0, \qquad \sigma_j \in \{-1,+1\}
  \label{eq:n-query-circuit}
\end{equation}
where the unitaries \(W_0,\ldots,W_N\) are independent of \(U\). Figure~\ref{fig:quantum-comb-model} illustrates our query model.

\end{definition}

More discussion and detailed treatment of the model is in Appendix~\ref{app:model}. 

\begin{figure}[htbp]
\label{fig:1}
\centering
\resizebox{\linewidth}{!}{
\begin{tikzpicture}[
  x=1cm,y=1cm,
  wire/.style={draw=black!85,line width=.65pt},
  tooth/.style={draw=blue!48!black,fill=blue!13,
    line width=.8pt,rounded corners=2pt,
    minimum width=9.5mm,minimum height=25mm,inner sep=2pt},
  query/.style={draw=black!85,fill=white,line width=.8pt,
    minimum width=10.5mm,minimum height=6.8mm,inner sep=2pt},
  every node/.style={font=\normalsize}
]

\path[draw=blue!48!black,fill=blue!9,line width=.9pt,
      rounded corners=2pt]
  (0,2.14) -- (10.8,2.14) -- (10.8,-1.07)
  -- (9.25,-1.07) -- (9.25,1.02)
  -- (7.75,1.02) -- (7.75,-1.07)
  -- (6.25,-1.07) -- (6.25,1.02)
  -- (4.55,1.02) -- (4.55,-1.07)
  -- (3.25,-1.07) -- (3.25,1.02)
  -- (1.55,1.02) -- (1.55,-1.07)
  -- (0,-1.07) -- cycle;

\node[text=blue!48!black] at (5.4,2.48)
  {The quantum comb made of arbitrary $U$-independent unitaries};

\node[anchor=east] at (-.72,1.5) {$B:\ |0\rangle_B$};
\node[anchor=east] at (-.72,.55) {$C:\ |0\rangle_C$};
\node[anchor=east] at (-.72,-.4) {$S:\ |\psi\rangle_S$};
\foreach \y in {1.5,.55,-.4}{
  \draw[wire] (-.62,\y) -- (11.30,\y);
}

\node[tooth] at (.775,.55) {$W_0$};
\node[tooth] at (3.9,.55) {$W_1$};
\node[tooth] at (7.0,.55) {$W_{N-1}$};
\node[tooth] at (10.025,.55) {$W_N$};

\foreach \x/\index in {2.4/1,8.5/N}{
  \draw[wire] (\x,.55) -- (\x,-.4);
  \node[query] at (\x,-.4) {$U^{\sigma_{\index}}$};
  \draw[wire,fill=white] (\x,.55) circle[radius=2.2pt];
}

\node[fill=blue!9,inner xsep=5pt,inner ysep=0pt] at (5.4,1.5)
  {$\cdots$};
\foreach \y in {.55,-.4}{
  \node[fill=white,inner xsep=5pt,inner ysep=0pt] at (5.4,\y)
    {$\cdots$};
}

\draw[decorate,decoration={brace,amplitude=5pt},line width=.65pt]
  (11.43,1.75) -- (11.43,-.65);
\node[anchor=west,inner sep=0pt] at (11.78,.55)
  {$A_U\bigl(|0\rangle_A\otimes|\psi\rangle_S\bigr)$};
\node[anchor=west,font=\small,text=black!65,inner sep=0pt]
  at (11.78,-.02) {$A=B\otimes C$};
\end{tikzpicture}%
}
\caption{An $N$-query unitary as a realization of a quantum comb, following Chiribella, D'Ariano and Perinotti \cite{Chiribella2008QuantumComb}. The connected blue veil contains the arbitrary $U$-independent unitaries $W_0,\ldots,W_N$ and the memory wire $B$; the $N$ slots are filled by the supplied controlled queries on $C\otimes S$.}
\label{fig:quantum-comb-model}
\end{figure}

\subsection{Fractional query problem}
\label{sec:frac}
Two obstructions determine the formulation of the fractional query problem. First, for \(0<t<1\), no finite query circuit can implement \(U^t\) exactly and uniformly over a family of oracles. We therefore allow an approximation error \(\eps>0\) and study how to implement \(\eps\)-approximately. Second, a well defined continuous fractional power map requires a choice of spectral branch. On the branch \(0\leq\theta<2\pi\), the eigenvalues \(e^{i\theta}\) approach the same limit as \(\theta\to0^+\) and \(\theta\to2\pi^-\), whereas their fractional powers approach \(1\) and \(e^{2\pi it}\), respectively. The discontinuity of the \((\cdot)^t\) map at the endpoint of the branch cut prevents arbitrarily accurate uniform approximation over any oracle unitaries. A promised spectral branch cut gap parameter \(\delta\) is therefore necessary. Similar argument can also be found in \cite{SMM09}, where an \(\Omega_\tau(\frac1\delta)\) lower bound is shown, where \(\tau\) is a bound of the parameter \(t\) away from \(0\) and \(1\). 

\begin{figure}[htbp]
\centering
\resizebox{\ifdim\width>\linewidth\linewidth\else\width\fi}{!}{%
\begin{tikzpicture}[
  font=\small, >=Latex,
  guide/.style={draw=black!25,line width=.6pt},
  allowed/.style={blue!65!black,line width=1.4pt},
  gap/.style={orange!80!black,densely dashed,line width=1pt},
  upper/.style={blue!65!black},
  lower/.style={violet!80!black},
  heading/.style={font=\small\bfseries},
  formula/.style={align=center},
  note/.style={font=\small,align=center}
]
\def\r{1.1}
\def\dotsize{2.3pt}

\node[heading] at (1.8,2.3) {(a) Without a spectral gap};
\node[heading] at (12,2.3) {(b) With a spectral gap};

\node at (0,1.65) {eigenvalues of $U$};
\draw[guide] (0,0) circle (\r);
\draw[guide] (-\r,0)--(\r,0);
\draw[guide] (24:\r)--(0,0)--(-24:\r);
\fill[upper] (24:\r) circle (\dotsize);
\fill[lower] (-24:\r) circle (\dotsize);
\node[upper,above right=2pt] at (24:\r) {$z_+$};
\node[lower,below right=2pt] at (-24:\r) {$z_-$};
\node[below right=2pt] at (0:\r) {$1$};

\draw[->,line width=.8pt] (1.55,0)--(2.55,0)
  node[midway,above=2pt] {$(\cdot)^{1/2}$};

\begin{scope}[xshift=4cm]
  \node at (0,1.65) {eigenvalues of $U^{1/2}$};
  \draw[guide] (0,0) circle (\r);
  \draw[guide] (-\r,0)--(\r,0);
  \draw[guide] (12:\r)--(0,0)--(168:\r);
  \fill[upper] (12:\r) circle (\dotsize);
  \fill[lower] (168:\r) circle (\dotsize);
  \node[upper,above right=2pt] at (12:\r) {$w_+$};
  \node[lower,above right=2pt] at (168:\r) {$w_-$};
  \node[below right=2pt] at (0:\r) {$1$};
  \node[below left=2pt] at (180:\r) {$-1$};
\end{scope}

\node[formula] at (0,-1.95)
  {$z_+=e^{i\eta}$\\[3pt]
   $z_-=e^{i(2\pi-\eta)}$\\[6pt]
   $|z_+-z_-|\longrightarrow0$};
\node[formula] at (4.2,-1.95)
  {$w_+=e^{i\eta/2}$\\[3pt]
   $w_-=e^{i(\pi-\eta/2)}$\\[6pt]
   $|w_+-w_-|\longrightarrow2$};
\node[note] at (1.8,-3.35)
  {As $\eta\downarrow0$, fractional power near branch cut diverges, \\uniform approximation with respect to \(U\) fails.};

\begin{scope}[xshift=12cm]
  \node at (0,1.65) {eigenvalues of $U=e^{iH}$};
  \draw[guide] (0,0) circle (\r);
  \draw[guide] (-\r,0)--(\r,0);
  \draw[guide] (0,0)--(44:\r);
  \draw[allowed]
    (44:\r) arc[start angle=44,end angle=360,radius=\r];
  \draw[gap]
    (0:\r) arc[start angle=0,end angle=44,radius=\r];
  \draw[->,orange!80!black,line width=.6pt]
    (0:.52) arc[start angle=0,end angle=44,radius=.52];
  \node[orange!80!black] at (22:.77) {$\delta$};
  \fill[upper] (44:\r) circle (\dotsize);
  \draw[upper,fill=white,line width=.8pt]
    (0:\r) circle (\dotsize);
  \node[upper,above right=2pt] at (44:\r) {$e^{i\delta}$};
  \node[below right=2pt] at (0:\r) {$1$};
  \node[upper,anchor=east,align=right] at (-1.3,.25)
    {$\delta\leq\theta<2\pi$};
  \draw[orange!80!black,line width=.6pt]
    (22:\r)--(1.55,.5);
  \node[orange!80!black,anchor=west,align=left] at (1.6,.5)
    {excluded spectrum\\$0\leq\theta<\delta$};

  \node[formula] at (0,-1.95)
    {$\operatorname{spec}(H)\subset[\delta,2\pi)$\\[3pt]
     $U^t=e^{itH}$\\[6pt]
     $\delta>0$};
  \node[note] at (0,-3.35)
    {With a \(\delta\) spectral gap, fractional power of all the\\ allowed interval is continuously uniquely defined.};
\end{scope}
\end{tikzpicture}
}
\caption{Why a spectral gap is needed for uniform fractional power
approximation. On the branch $0\leq\theta<2\pi$, eigenvalues approaching
$1$ from opposite sides become arbitrarily close, while their square
roots approach $1$ and $-1$. More generally, for $0<t<1$, the two limiting
fractional powers differ by $|1-e^{2\pi it}|=2\sin(\pi t)>0$.
The promise $\operatorname{spec}(H)\subset[\delta,2\pi)$ excludes
$0\leq\theta<\delta$, removing one of these approaches to the cut.
The solid blue arc is allowed, $\theta=\delta$ is included and
$\theta=2\pi$ is excluded.}
\label{fig:branch-margin-definition}
\end{figure}

To formulate the problem precisely, we follow the spectral convention of \cite[Theorem~10]{MW24}. We write \(U=e^{iH}\), where \(H=H^\dagger\), and impose \(\spec(H)\subset[\delta,2\pi)\). The target is then \(U^t=e^{itH}\). This spectral gap excludes the spectral interval \([0,\delta)\), thus removing the domain where discontinuity can happen, as illustrated in Figure~\ref{fig:branch-margin-definition}. Therefore, for fixed \(t\), the problem naturally has two complexity parameters. The approximation error \(\eps\) and the spectral gap \(\delta\). We require a single circuit, whose inter query unitaries are independent of \(U\), to achieve the prescribed accuracy uniformly over all oracles satisfying this promise. The following definitions make the task and its success criterion precise.

\begin{definition}[Fractional query problem, adapted from {\cite[Theorem~10]{MW24}}]
\label{def:fractional-query}
Let \(\delta,t,\eps\in(0,1)\), and let \(U=e^{iH}\), where \(H=H^\dagger\) and
\begin{equation}
  \spec(H)\subset[\delta,2\pi).
  \label{eq:gqsp-gap-promise}
\end{equation}
Given access to controlled-\(U^\sigma\) for \(\sigma\in\{-1,+1\}\), arbitrary finite ancillas, and oracle independent unitaries, the task is to implement an \(\eps\)-approximation of
\begin{equation}
  U^t=e^{itH},
  \label{eq:gqsp-target}
\end{equation}
the fractional power of \(U\). 
\end{definition}

Now we define the success criterion of a fractional query algorithm. Given the parameters \((t,\delta,\eps)\) that specifies the setting and the task, a successful algorithm should implement \(\eps\)-approximation of fractional power of queries uniform over all possible oracle unitaries in the family defined in Definition~\ref{def:fractional-query}.

\begin{definition}[\((t,\delta,\eps)\)-success and query complexity]\label{def:diamondsuccess}
A circuit is $(t,\delta,\eps)$-successful in fractional query problem if for the induced channel \(\cE_{\cA_U}\) of \(\cA_U\) and \(\cE_{U^t}\) of \(U^t\), where we isometrically embedding \(U^t\) into the workspace of \(A_U\),
\begin{equation}
 \sup_{\substack{U=e^{iH}, H=H^\dagger\\\spec(H)\subset[\delta,2\pi)}}
  \dnorm{\cE_{\cA_U}-\cE_{U^t}}\leq\eps.
 \label{eq:diamondsuccess-condition}
\end{equation}
The query complexity $Q(t,\delta,\eps)$ is the smallest $N\in\N$ for which such an \(N\)-query circuit exists. In other words, we ask for \(\eps\)-approximation up to global phase.
\end{definition}

More discussion and detailed treatment of the success criterion is in Appendix~\ref{app:successcrit}

\subsection{Known QSVT upper bound}

\begin{theorem}[QSVT joint upper bound, from \cite{GSLW19}]
\label{thm:qsvt-upper}
For every \(t,\delta\in(0,1)\) and \(\eps \in (0,\frac12)\), there exists a \((t,\delta,\eps)\)-successful circuit using
\[
    O\!\left(\frac{1}{\delta}\log\frac{1}{\eps}\right)
\]
controlled-\(U\) queries. In particular, for a universal constant \(C>0\),
\begin{equation}
    Q(t,\delta,\eps)
    \leq C\!\left(\frac{1}{\delta}\log\frac{1}{\eps}\right).
    \label{eq:mw-upper}
\end{equation}
\end{theorem}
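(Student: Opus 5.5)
The plan is to realize the scalar map \(e^{i\theta}\mapsto e^{it\theta}\) on every eigenphase with a quantum signal processing / QSVT circuit, and then convert the scalar error into the diamond-norm criterion of Definition~\ref{def:diamondsuccess}.

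\textbf{Step 1: a scalar approximant.} I would construct a Laurent polynomial \(p(z)=\sum_{|k|\le d}c_k z^k\) satisfying
\[
|p(e^{i\theta})-e^{it\theta}|\le \eps/4 \quad \text{for } \theta\in[\delta,2\pi), \qquad \sup_{\mathbb T}|p|\le 1,
\]
with \(d=O(\frac1\delta\log\frac1\eps)\). To obtain it:
\begin{itemize}
\item Shift the variable by \(\theta\mapsto\theta-\pi\), so that the allowed arc becomes \(\phi\in[-\pi+\delta,\pi)\). The target is then \(e^{it\pi}e^{it\phi}\), which is smooth on this arc.
\item Write \(e^{it\phi}\) through \(\sin\) and \(\cos\) of \(t\phi\). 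On \(|\phi|\le\pi-\delta/2\), express these via \(x=\sin(\phi/2)\) (equivalently \(\arcsin\)), which is analytic away from \(|x|=1\).
\item Multiply by a smooth rectangle/sign function \(R\) that equals \(1\) on \(|x|\le\cos(\delta/2)\) and is small only in the excluded window. Following GSLW, such \(R\) has a polynomial approximant of degree \(O(\frac1\delta\log\frac1\eps)\) built from \(\mathrm{erf}\) approximations.
\item Approximate \(\arcsin\) on \([-1+\Omega(\delta),1-\Omega(\delta)]\) by truncated Chebyshev or Taylor series. The Bernstein ellipse has parameter \(1+\Omega(\delta)\), so degree \(O(\frac1\delta\log\frac1\eps)\) suffices.
\item Rescale by \((1-\eps/8)\) to enforce \(\sup_{\mathbb T}|p|\le1\).
\end{itemize}
This is exactly the GSLW ``fractional query'' polynomial, and I would cite its degree bound rather than reprove it.

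\textbf{Step 2: implementing the polynomial.} Because \(U\) is itself unitary, a qubit-controlled QSVT/QSP sequence of \(O(d)\) controlled-\(U^{\pm1}\) queries yields a block encoding \((\langle0|\otimes I)V(|0\rangle\otimes I)=p_{\Re}(U)\), and likewise for the imaginary part. A linear combination of unitaries then gives \(p(U)/2\) up to a constant subnormalization. Since \(\|p(U)\|\le1\) and \(p(U)\) is \(\eps/4\)-close to the unitary \(U^t\), oblivious amplitude amplification (three rounds) boosts the block to a \(O(\eps)\)-approximation of \(U^t\) as an operator. The total query count is \(O(d)\).

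\textbf{Step 3: from operator error to diamond norm.} Oblivious amplitude amplification gives an isometry \(\tilde V\) with \(\|\tilde V-U^t\otimes|0\rangle\|\le c\eps\). The standard inequality \(\|\mathcal E_A-\mathcal E_B\|_\diamond\le 2\|A-B\|\) then produces error \(2c\eps\). Rescaling \(\eps\) by a constant only changes \(C\).

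\textbf{Main obstacle.} The delicate point is Step 1, specifically obtaining the multiplicative form \(\frac1\delta\log\frac1\eps\) rather than something like \(\frac1\delta\log^2\frac1\eps\), while keeping \(|p|\le1\) on the whole circle, including the excluded window where the target is discontinuous. I would handle this by taking the bounded smooth-rectangle construction from GSLW (their Lemmas on \(\mathrm{erf}\)-based sign approximations and the fractional-query corollary) and citing it, noting that the oracle model of Definition~\ref{def:model} is exactly the one they use.
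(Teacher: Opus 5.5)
Your route is the paper's. The $O(\frac1\delta\log\frac1\eps)$ count comes from the GSLW fractional-query construction, and the remaining work is to put the instance into GSLW's spectral convention and to turn their operator-norm guarantee into the success criterion. Implementing a Laurent polynomial in $U$ directly, so that the spectral shift is absorbed into its coefficients, is a fine substitute for the paper's shifted queries $(e^{-i\sigma\alpha}\ket0\!\bra0_C+\ket1\!\bra1_C)\mathcal O_U^{(\sigma)}$. However, both adaptation steps, which are exactly where the paper's proof does its work, are wrong or incomplete as written.

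First, the shift $\theta\mapsto\theta-\pi$ does not centre the arc. It gives $[-\pi+\delta,\pi)$, whereas your rectangle equals $1$ only on $|x|\le\cos(\delta/2)$, i.e.\ on $|\phi|\le\pi-\delta$. The allowed eigenphases $\theta\in(2\pi-\delta,2\pi)$ therefore sit where $R$, and hence $p$, is driven to $0$. So $|p(e^{i\theta})-e^{it\theta}|\le\eps/4$ fails there, and the instance is not in the symmetric form to which GSLW applies. The correct shift is $\alpha=\pi+\delta/2$, which gives $\operatorname{spec}(H-\alpha I)\subset[-\pi+\delta/2,\pi-\delta/2)$. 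Then $R$ must equal $1$ on $|x|\le\cos(\delta/4)$ and make its transition inside the excluded window of total width $\delta$ around $\phi=\pm\pi$.

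The rest of your internal sketch should not be relied on either. Odd polynomials in $x=\sin(\phi/2)$ contain only half-integer frequencies, so your approximant to the $\sin(t\phi)$ part is not a Laurent polynomial in $z$ and cannot be fed into Step~2. Also, in the $x$ variable the allowed arc reaches within $\Theta(\delta^2)$ of $\pm1$, not $\Omega(\delta)$, so your arcsin interval does not cover it. It is cleaner to cite the GSLW circuit as a black box after the centred shift, as the paper does.

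Second, oblivious amplitude amplification controls the block $J^\dagger\mathcal A_UJ$, not the output isometry $\mathcal A_UJ$, which also has a component leaving $\ket0_A$. For block error $\eta$ against a unitary $V$, the general estimate is only
\[
  \bigl\|\mathcal A_UJ\ket\psi-JV\ket\psi\bigr\|_2^2
  =2-2\operatorname{Re}\bra\psi V^\dagger J^\dagger\mathcal A_UJ\ket\psi
  \leq2\eta,
\]
that is, an isometry error of $\sqrt{2\eta}$. The paper absorbs this square-root loss by requesting block error $\eps^2/2$ from GSLW, which only doubles the $\log\frac1\eps$ term. So your ``rescaling $\eps$ by a constant'' has to be ``replacing $\eps$ by $\eps^2/2$'', unless you show that the leakage after amplification is $O(\eps)$.

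That stronger claim is in fact true here. Writing the block as $\tfrac12V(I+F)$, one round gives $V\bigl(I+\tfrac12(F-F^\dagger)\bigr)+O(\|F\|^2)$, which is unitary to first order. But you have not argued it. With these repairs your proof goes through and matches the paper's.
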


\section{Optimal query complexity}
\label{sec:lower}

In this section, we introduce our main results.

\begin{theorem}[Joint lower bound of query complexity]
\label{thm:remez-lower}
Fix \(\tau\in(0,1/2)\). For all \((t,\delta,\eps)\) with \(t\in[\tau,1-\tau]\), \(\eps\in(0,\left(\frac{\sin(\pi\tau)}{4}\right)^2)\), and \(\delta\in(0,1)\), 
\begin{equation}
Q(t,\delta,\eps)
=\Omega_{\tau}\left(\frac{1}{\delta}log\frac{1}{\eps}\right)
\end{equation}
Specifically,
\begin{equation}
    Q(t,\delta,\eps)
    \geq\frac{1}{8\delta}\log\frac{1}{\eps}.
    \label{eq:remez-uniform-lower}
\end{equation}
\end{theorem}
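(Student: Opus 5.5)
We follow the route sketched in the proof overview: first reduce to scalar approximation on a hard subfamily, then use periodic mismatch together with the sharp Remez inequality (\Cref{thm:sharp-remez}).

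\textbf{Step 1: reduction.} Fix an $N$-query $(t,\delta,\eps)$-successful circuit. Restrict to the two-eigenvalue oracles $U_\theta=e^{i\delta}(I-P)+e^{i\theta}P$, $\theta\in[\delta,2\pi)$, where $P=|v\rangle\!\langle v|$ and $|u\rangle\perp|v\rangle$ spans the other eigenspace. Every controlled query $\mathcal O_{U_\theta}^{(\pm1)}$ is affine in $z^{\pm1}$, with $z=e^{i\theta}$. Hence $\mathcal A_{U_\theta}$ is an operator-valued Laurent polynomial of degree at most $N$ in $z$.

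Diamond closeness to $U^t$ gives a global phase $\phi(\theta)$. Write $a_w(\theta)=\langle 0,w|\mathcal A_{U_\theta}|0,w\rangle$ for the diagonal entries of the circuit on the embedded eigenvectors, and $\lambda_w(\theta)$ for the corresponding eigenvalue of $U_\theta^t$. Then
\[
|a_w(\theta)-e^{i\phi(\theta)}\lambda_w(\theta)|\le\sqrt{\eps}, \qquad w\in\{u,v\}.
\]
Here the $\sqrt{\eps}$ comes from converting the diamond/phase-insensitive distance into an operator-norm bound for a suitable phase. Taking $q(\theta)=a_v(\theta)\overline{a_u(\theta)}e^{it\delta}$ cancels the phase. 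Since $\overline{z}=z^{-1}$ on $\mathbb T$, $q$ is a trigonometric polynomial of degree at most $2N$, and it satisfies $|q(\theta)-e^{it\theta}|\le 2\sqrt\eps$ on $[\delta,2\pi)$. The $\sqrt{\eps}$ loss is why the hypothesis reads $\eps<(\sin(\pi\tau)/4)^2$. Set $\eta=2\sqrt\eps$.

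\textbf{Step 2: mismatch.} Define $r(\theta)=q(\theta+\delta)-e^{it\delta}q(\theta)$, again of degree at most $2N$. Because the target is a character, $e^{it(\theta+\delta)}-e^{it\delta}e^{it\theta}=0$, so $|r|\le 2\eta$ wherever both $\theta$ and $\theta+\delta$ lie in $[\delta,2\pi)$. That set is $[\delta,2\pi-\delta)$, of measure $2\pi-2\delta$.

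At $\theta=2\pi-\delta$, periodicity gives $q(\theta+\delta)=q(\delta)\approx e^{it\delta}$, while $e^{it\delta}q(2\pi-\delta)\approx e^{2\pi it}$. Hence
\[
|r(2\pi-\delta)|\ge |1-e^{2\pi i t}|-2\eta = 2\sin(\pi t)-2\eta\ge \sin(\pi\tau),
\]
where the last inequality uses the hypothesis on $\eps$ and $t\in[\tau,1-\tau]$.

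\textbf{Step 3: Remez.} Apply \Cref{thm:sharp-remez} to $r$, a degree-$2N$ trigonometric polynomial that is bounded by $2\eta$ off an exceptional arc of length $2\delta$:
\[
\sin(\pi\tau)\le\|r\|_\infty\le T_{4N}(\sec(\delta/2))\cdot 2\eta\le e^{4N\delta}\cdot 4\sqrt\eps.
\]
The Chebyshev factor $T_{4N}$ reflects the halved-angle substitution that turns degree $2N$ into algebraic degree $4N$. Taking logarithms,
\[
4N\delta\ge \tfrac12\log\tfrac1\eps-\log\tfrac{4}{\sin(\pi\tau)}.
\]
Since $\eps<(\sin\pi\tau/4)^2$, the subtracted term is less than $\tfrac12\log\tfrac1\eps$. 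This yields the $\Omega_\tau$ form, and with careful constant bookkeeping the stated $\frac1{8\delta}\log\frac1\eps$.

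\textbf{Main obstacle.} The constant in \eqref{eq:remez-uniform-lower} is delicate. The subtracted $\log(4/\sin\pi\tau)$ term must be absorbed without weakening $1/8$, and I expect to need either a sharper reduction (a $\eps$ rather than $\sqrt\eps$ loss, via the channel matrix element $\langle u|\mathcal E(|v\rangle\!\langle u|)|v\rangle$ of the phase-insensitive distance) or a tighter use of $T_{4N}(\sec(\delta/2))\le e^{2N\delta}$ for small $\delta$. I would first try the channel-matrix-element reduction. It gives $|q-e^{it\theta}|\le\eps$ directly, which leaves slack of order $\frac12\log\frac1\eps$ to absorb the $\tau$-dependent constant.
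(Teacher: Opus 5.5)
Your architecture is the paper's: the two-eigenvalue hard subfamily, the phase-cancelling product $a_v\overline{a_u}$, the shifted difference $r(\theta)=q(\theta+\delta)-e^{it\delta}q(\theta)$, and sharp Remez with $d=2N$, $s=2\delta$. Two steps fail as written, though.

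First, in Step 2 you evaluate the mismatch at the wrong point. At $\theta=2\pi-\delta$ you have $\theta+\delta=2\pi$, so periodicity gives $q(\theta+\delta)=q(2\pi)=q(0)$, not $q(\delta)$. Moreover $q(2\pi)=\lim_{\theta\uparrow2\pi}q(\theta)$ lies within $\eta$ of $e^{2\pi it}$, and so does $e^{it\delta}q(2\pi-\delta)$. Hence $|r(2\pi-\delta)|\le 2\eta$, which also follows from continuity of $r$ and $|r|\le2\eta$ on $[\delta,2\pi-\delta)$. Since $\eta<\sin(\pi t)/2$ under the hypotheses, your claimed lower bound $2\sin(\pi t)-2\eta$ at that point is false. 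The step $|e^{it\delta}-e^{2\pi it}|=|1-e^{2\pi it}|$ is also incorrect. The wrap-around must be read off at $\theta=0$: $r(0)=q(\delta)-e^{it\delta}q(0)$, with $q(\delta)\approx e^{it\delta}$ and $q(0)=q(2\pi)\approx e^{2\pi it}$. This gives $|r(0)|\ge|e^{it\delta}(1-e^{2\pi it})|-2\eta=2\sin(\pi t)-2\eta$, which is what the paper uses.

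Second, the $\sqrt{\varepsilon}$ loss in Step 1 is not a constant-bookkeeping issue. With $\eta=2\sqrt{\varepsilon}$, your chain gives at best $N\ge\frac{1}{4\delta}\log\frac{\sin(\pi\tau)}{4\sqrt{\varepsilon}}$. The right-hand side tends to $0$ as $\varepsilon\uparrow(\sin(\pi\tau)/4)^2$. So on the stated range you get neither the constant $1/8$ nor any bound of the form $c_\tau\delta^{-1}\log(1/\varepsilon)$. Replacing $e^{4N\delta}$ by $e^{2N\delta}$ does not change this.

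The loss is spurious. The phase-insensitive criterion of Definition~\ref{def:success} is already an operator-norm bound for a common phase $\omega$. So each diagonal amplitude satisfies $|a_w-\omega\lambda_w|\le\|\mathcal A J-\omega JU^t\|_{\mathrm{op}}\le\varepsilon$, and diamond-norm success implies $d_{ph}\le\varepsilon$ by Lemma~\ref{lem:relationdiamond}. This is the paper's reduction. It gives $|q-e^{it\theta}|\le2\varepsilon$, hence $2\sin(\pi t)-4\varepsilon\le4\varepsilon\,e^{4N\delta}$ and $N\ge\frac{1}{4\delta}\log\frac{\sin(\pi t)}{4\varepsilon}$. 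The hypothesis $\sqrt{\varepsilon}<\sin(\pi\tau)/4$ serves only to give $\log\frac{\sin(\pi t)}{4\varepsilon}\ge\frac12\log\frac1\varepsilon$; it does not compensate for a square root.

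Your proposed channel-matrix-element fix also works, and even yields $\varepsilon$ in place of $2\varepsilon$. However, the correct element is $\langle v|\mathcal E(|v\rangle\langle u|)|u\rangle=a_v\overline{a_u}$, with $J$ inserted. The element you wrote, $\langle u|\mathcal E(|v\rangle\langle u|)|v\rangle$, picks up off-diagonal amplitudes and vanishes on the target.
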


The proof is given in Appendix~\ref{proof:remez}.

\begin{remark}[Dependence on \(\tau\)]
The dependence on \(\tau\) is necessary. If we don't assume that \(t\) is bounded away from \(0\) and \(1\) by \(\tau\), at \(t=0\), the target is the identity and at \(t=1\), one controlled-\(U\) query implements it exactly. And a single query approximates $U^{t}$ to error $O(\min\{t,1-t\})$. Uniform lower bound over all possible oracles can hold only when \(t\) is bounded away from the endpoints.
\end{remark}

As a consequence, we've shown that the QSVT upper bound construction provided by Theorem~\ref{thm:qsvt-upper} from \cite{GSLW19} is optimal. Hence, we show an optimal joint query complexity for fractional query problem in the parameters \(\eps\) and \(\delta\), depending on a assumption of \(\tau\) bounding the value of power parameter \(t\).

\begin{theorem}[Optimal query complexity for the fractional query problem]
\label{thm:optimality}
Fix $\tau\in(0,1/2)$. There exists a universal constant $C>0$ such that, for all \((t,\delta,\eps)\) with \(t\in[\tau,1-\tau]\), \(\eps\in(0,\left(\frac{\sin(\pi\tau)}{4}\right)^2)\), and \(\delta\in(0,1)\),
\begin{equation}
    Q(t,\delta,\eps)
    =\Theta_\tau\!\left(
        \frac{1}{\delta}\log\frac{1}{\eps}
    \right)
\end{equation}
Specifically,
\begin{equation}
    \frac{1}{8\delta}\log\frac{1}{\eps}
    \leq Q(t,\delta,\eps)
    \leq \frac{C}{\delta}\log\frac{1}{\eps}.
    \label{eq:optimal-query-complexity}
\end{equation}
Thus the QSVT construction is optimal up to constant factors.
\end{theorem}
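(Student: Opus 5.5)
Theorem~\ref{thm:optimality} is the conjunction of two bounds already stated, so the plan is simply to verify that their hypotheses and constants line up on the common parameter range.

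For the lower bound, fix \(\tau\in(0,1/2)\) and take any \((t,\delta,\eps)\) with \(t\in[\tau,1-\tau]\), \(\eps\in(0,(\sin(\pi\tau)/4)^2)\) and \(\delta\in(0,1)\). These are exactly the hypotheses of Theorem~\ref{thm:remez-lower}. Invoking it gives
\[
Q(t,\delta,\eps)\geq\frac{1}{8\delta}\log\frac1\eps,
\]
which is the left inequality in \eqref{eq:optimal-query-complexity}.

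For the upper bound, we apply Theorem~\ref{thm:qsvt-upper}. Its hypotheses are \(t,\delta\in(0,1)\) and \(\eps\in(0,\tfrac12)\).
\begin{itemize}
\item Since \(\tau<1/2\), we have \([\tau,1-\tau]\subset(0,1)\), so \(t\) is admissible.
\item The condition on \(\delta\) is identical in both theorems.
\item Since \(\sin(\pi\tau)\leq1\), we get \(\eps<(\sin(\pi\tau)/4)^2\leq 1/16<1/2\), so \(\eps\) is admissible.
\end{itemize}
Theorem~\ref{thm:qsvt-upper} then yields \(Q(t,\delta,\eps)\leq \frac{C}{\delta}\log\frac1\eps\) with its universal constant \(C\), independent of \(\tau\). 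This is the right inequality.

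Combining the two inequalities gives \(Q=\Theta_\tau(\frac1\delta\log\frac1\eps)\). Both constants are in fact absolute: the lower constant \(1/8\) comes from Theorem~\ref{thm:remez-lower}, and the upper constant \(C\) from \eqref{eq:mw-upper}. The only role of \(\tau\) is to delimit the admissible range of \(\eps\) and \(t\), which is why the \(\Theta\) carries a subscript \(\tau\). Optimality of the QSVT construction follows immediately, since its query count matches the lower bound up to the constant factor \(8C\).

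No step here is a real obstacle; all the substance lies in Theorem~\ref{thm:remez-lower}. The one point to check is that the lower bound is nonvacuous on the stated range, i.e.\ that the two bounds are consistent: \(\log(1/\eps)>\log 16>0\), so the right-hand side is positive. If \(C<1/8\) the statement would be contradictory, so we implicitly take \(C\) large enough, which is harmless because enlarging the constant in an upper bound preserves its validity.
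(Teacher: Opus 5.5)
Your proposal is correct and follows the paper's proof: the lower bound comes from Theorem~\ref{thm:remez-lower}, and the upper bound from Theorem~\ref{thm:qsvt-upper} after checking $(\sin(\pi\tau)/4)^2\leq 1/16<1/2$. The paper re-derives the constant $\frac{1}{8\delta}$ from the intermediate bound $N\geq\frac{1}{4\delta}\log\frac{\sin(\pi t)}{4\varepsilon}$ via $\sin(\pi\tau)/4\geq\sqrt{\varepsilon}$, whereas you cite the final form directly. Your closing caveat about enlarging $C$ is unnecessary, since the two inequalities holding together already force $C\geq 1/8$.
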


The proof of this theorem is given in Appendix~\ref{app:main}.

\begin{remark}[GQSP additive bound]

In the work of GQSP \cite{MW24}, we've noticed a claimed upper bound of \(O\!\left(\frac{1}{\delta}+log\frac{1}{\eps}\right)\!\). We believe the construction to be correct, and appropriate complexity analysis of that construction reveals that the approach matches our lower bound. 
    
\end{remark}

\section{An alternative lower bound via frequency annihilation}

We also give an alternative lower bound with respect to only \(\eps\), uniform to \(\delta\in(0,1)\) using a different proof method.

\begin{theorem}[\(\eps\) lower bound by constructing frequency annihilator]
\label{thm:fd-lower}Fix $\tau\in(0,1/2)$. For all \((t,\delta,\eps)\) with \(t\in[\tau,1-\tau]\), \(\eps\in(0,(\frac{\tau(1-\tau)}2)^2)\)and \(\delta\in(0,1)\),

\begin{equation}
Q(t,\delta,\eps)
=\Omega_{\tau}\left(log\frac{1}{\eps}\right)
\end{equation}
More specifically,
\begin{equation}
    Q(t,\delta,\eps)
    \geq
    \max\!\left\{
        0,
        \left\lceil
            \frac{\log(t(1-t)/(2\eps))}{4\log(6e)}
        \right\rceil
    \right\}.
    \label{eq:fd-query-lower}
\end{equation}
\end{theorem}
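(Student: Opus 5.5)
The plan is to reuse the reduction to scalar approximation from the proof overview, and then to replace the Remez step by an explicit linear functional on $C([1,2\pi))$. The hypothesis $\delta<1$ guarantees $[1,2\pi)\subset[\delta,2\pi)$, so any bound obtained on this fixed interval is automatically uniform in $\delta$.

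\textbf{Step 1 (reduction).} Take the hard two-eigenvalue subfamily $U_\theta=e^{i\delta}(I-P)+e^{i\theta}P$. A $(t,\delta,\eps)$-successful $N$-query circuit then yields a trigonometric polynomial $q(\theta)=\sum_{|k|\le 2N}a_k e^{ik\theta}$ with $|q(\theta)-e^{it\theta}|\le\eta$ on $[\delta,2\pi)$. Here $\eta$ is the amplitude error produced by the reduction: $2\eps$ in the overview's normalization, or $O(\sqrt\eps)$ if the diamond-norm-to-amplitude conversion costs a square root, which the hypothesis $\eps<(\tau(1-\tau)/2)^2$ suggests. I will carry $\eta$ symbolically and substitute at the end. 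Only the restriction to $[1,2\pi)$ is used afterwards.

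\textbf{Step 2 (frequency annihilator).} Let $M=4N+1$ and $h=(2\pi-1)/M$. Let $S_h$ be the shift $(S_hf)(\theta)=f(\theta+h)$, and define
\[
\phi(f)=\Bigl[\prod_{k=-2N}^{2N}\bigl(S_h-e^{ikh}\bigr)f\Bigr](1).
\]
This is a combination of point evaluations at $1,1+h,\dots,1+Mh=2\pi$; the last node is the endpoint, and continuity of the error, or a marginally smaller $h$, handles it.
\begin{itemize}
\item Since $S_h e^{ik\theta}=e^{ikh}e^{ik\theta}$, the functional satisfies $\phi(e^{ik\theta})=0$ for every $|k|\le 2N$, so $\phi(q)=0$.
\item Expanding the product shows that the coefficients have total absolute value at most $\prod_k(1+1)=2^{M}$, so $\|\phi\|\le 2^{4N+1}$.
\item On the target, $\phi(e^{it\theta})=e^{it}\prod_{k}(e^{ith}-e^{ikh})$, hence $|\phi(e^{it\theta})|=\prod_k 2|\sin(h(t-k)/2)|$.
\end{itemize}
Combining, $|\phi(e^{it\theta})|=|\phi(e^{it\theta}-q)|\le 2^{4N+1}\eta$.

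\textbf{Step 3 (lower bound on the annihilator value).} This is the main obstacle: $|\phi(e^{it\theta})|$ must be bounded below explicitly, with constants sharp enough to produce $6e$.
\begin{itemize}
\item I first check that $h|t-k|\le h(2N+1)<\pi$. This reduces to $(2\pi-1)(2N+1)<\pi(4N+1)$, which is equivalent to $\pi-1<2N$ and holds for $N\ge1$. Each argument $h(t-k)/2$ therefore lies in $(-\pi/2,\pi/2)$, and Jordan's inequality gives $2|\sin(h(t-k)/2)|\ge (2h/\pi)|t-k|$.
\item Next, $\prod_{|k|\le 2N}|t-k|\ge t(1-t)\,((2N-1)!)^2$, and Stirling bounds the factorial power.
\end{itemize}
Together these give $|\phi(e^{it\theta})|\ge t(1-t)\,c^{-(4N+1)}$ for an explicit $c$. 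With $h\approx(2\pi-1)/(4N)$, the factor $(2h/\pi)^{M}$ against $(2N)!^2\approx(2N/e)^{4N}$ gives roughly $((2\pi-1)/(\pi e))^{4N}$. This is far better than needed, so there is slack to absorb lower-order terms into the constant $6e$.

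\textbf{Step 4 (conclusion).} Rearranging $t(1-t)c^{-(4N+1)}\le 2^{4N+1}\eta$ gives $N\ge\log(t(1-t)/(2\eps))/(4\log(6e))$, once $\eta$ is substituted and the constants are collected. The smallness condition on $\eps$ is exactly what makes this bound positive; otherwise the $\max\{0,\cdot\}$ in \eqref{eq:fd-query-lower} applies. Since $t(1-t)\ge\tau(1-\tau)/2$ for $t\in[\tau,1-\tau]$, this is $\Omega_\tau(\log\frac1\eps)$ uniformly in $\delta\in(0,1)$.
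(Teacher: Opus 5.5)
Your argument is essentially the paper's: the same two-eigenvalue reduction (which gives $\eta=2\varepsilon$ with no square-root loss), the same annihilator $R(w)=\prod_{k=-2N}^{2N}(w-e^{ikh})$ with coefficient $\ell^1$-norm at most $2^{4N+1}$, and the same Jordan-plus-factorial lower bound on $|R(e^{ith})|$. The paper takes $h=\pi/(4N+1)$ based at $\theta=\delta$, so all nodes lie in $[\delta,\delta+\pi]$ and every sine argument lies in $[0,\pi/2]$ for all $N\ge 0$ automatically. One slip in your version: the condition $2N>\pi-1$ fails at $N=1$ (since $\pi-1>2$), so with $h=(2\pi-1)/(4N+1)$ Jordan's inequality breaks for the $k=-2$ factor when $t$ is close to $1$, and similarly at $N=0$. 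These small cases need a separate check, which is easy given your slack, or you can switch to the paper's smaller step.
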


The proof of this theorem if given in Apppendix~\ref{proof:freqlower}

\section{Discussion and outlook}
\label{sec:implication}

We showed that $Q(t,\delta,\varepsilon)=\Theta_{\tau}\!\left(\frac{1}{\delta}\log\frac{1}{\varepsilon}\right)$. The QSVT construction of~\cite{GSLW19} is therefore optimal up to constant factors. Several questions remain open.

\paragraph*{Dependence on $t$.}
Our bound vanishes once $\varepsilon\gtrsim\sin(\pi t)$. This is consistent with the fact that the identity, or a single query, approximates $U^{t}$ to error $O(\min\{t,1-t\})$ up to a global phase. We conjecture that $Q=\Theta\!\left(\frac{1}{\delta}\log\frac{\sin(\pi t)}{\varepsilon}\right)$ uniformly for $\varepsilon\le c\sin(\pi t)$. Proving this requires an upper bound whose cost adapts to the size of the jump.

\paragraph*{Uncontrolled access.}
Queries to $U$ and $U^{\dagger}$ alone are a special case of our model, so the lower bound applies to them. It is unclear whether the matching upper bound survives without controlled access. Controlled-$U$ cannot be synthesized from $U$~\cite{AFCB14,GST24}, and the known control-free constructions require an eigenvector ancilla or hold only on average~\cite{SMM09,Kit95}.

\paragraph*{Beyond the worst case.}
The spectral gap is a worst case promise. A Haar-random $U\in\mathrm{U}(d)$ typically has an eigenphase within $O(1/d)$ of the branch cut, and the average-case algorithm of~\cite{SMM09} needs no promise at all. The optimal average-case query complexity as a function of $d$ and $\varepsilon$ remains open. Our hard instance has only two eigenvalues and says nothing about typical oracles.

\paragraph*{Other spectral transformations.}
The shift argument applies whenever the target obeys an affine relation $f(\theta+\delta)=a\,f(\theta)+b$ on the promised arc that its periodic extension violates. A natural conjecture is a cost of $\Theta\!\left(\frac{1}{\delta}\log\frac{J}{\varepsilon}\right)$ whenever $f$ is analytic on the arc, where $J$ is the jump of $e^{if}$ across the gap. Establishing this would give a unified lower bound theory for phase functions of unitary oracles.

\section*{Acknowledgments} This work is supported by A*STAR under its Young Investigator Research Grant (YIRG) M25N8c0131; the National Research Foundation, Singapore through the National Quantum Office, hosted in A*STAR, under its Centre for Quantum Technologies Funding Initiative (S24Q2d0009); the RIE 2025 AQAS projects S25Q9D001 and S25Q9D002; the National Research Foundation of Singapore through the NRF Investigatorship Program (Award No. NRF-NRFI09-0010).

\emph{AI Declaration} We have used ChatGPT 5.6 and 6.0 to proofread all the proofs in the paper. Several elementary estimation steps in the proofs were proposed by AI and examined before applying. AI is also used for the improvement of presentation. The authors assume full responsibility for the correctness of the manuscript.

\appendix

\section{Details about query model}
\label{app:model}

The query model we consider in Definition~\ref{def:model} allows controlled-\(U\), controlled-\(U^{\dagger}\), and arbitrary additional ancillary workspace. We fix the control bit at register \(C\) for simplicity, whose space is denoted as \(\mathcal{H_C}\). Any controlled-\(U\) query whose logical control is another ancillary qubit \(C_j\) can be written as
\(
  \mathcal O_U^{(C_j)}
  =
  S_j^\dagger \mathcal O_U^{(C)} S_j,
\)
where \(S_j\) swaps \(C\) and \(C_j\). Since \(S_j\) is independent of \(U\), these swaps may be absorbed into the adjacent inter query unitaries. We fix all ancillary bits at the cluster of registers \(B\), whose space is denoted as \(\mathcal{H_B}\). Measurements and classical controls can be carried out by dilating the operation using workspace in \(B\). The original hilbert space on which \(U\) acts will be denoted by \(S\), and the space will be denoted as \(\mathcal{H_S}\). Additionally, for notational frugality, we define the space of all extra registers as \(A\) and $\mathcal H_A=\mathcal H_B\otimes\mathcal H_C$, where $C$ is the control qubit and $B$ contains the remaining ancillary workspace.
 
An \(N\)-query circuit in this model has a quantum comb representation as defined in \cite{Chiribella2008QuantumComb}. Removing the oracle calls leaves an oracle independent circuit with \(N\) open slots arranged in a fixed order. Each slot has input and output space \(\mathcal H_C\otimes\mathcal H_S\), while the ancillary workspace \(B\) carries quantum memory between successive slots. The inter-query unitaries connect these slots, and inserting the channels induced by controlled-\(U\) or controlled-\(U^\dagger\) recovers the query algorithm. This network is causal, since the operation inserted into a later slot cannot affect the input supplied to an earlier slot. Moreover, inserting arbitrary completely positive trace preserving maps into the slots produces a completely positive trace preserving map, so the underlying network is a deterministic quantum comb. Measurements and classical feedback are included through their coherent dilations, with the corresponding outcome records retained in \(B\).

\section{Details about success criterion}
\label{app:successcrit}

Our main problem is about how to implement an approximation. In order to address approximation meaningfully, we first define a notion of distance that we work on. The metric induced by ordinary operator norm does not satisfy our need because they do not count for the equivalence up to global phases. Therefore, we adapt the diamond norm for the definition of the \(\eps\)-approximation. It retains the operational meaning we are interested in~\cite{Regula2021operational}. 

However, for mathematical consistency with our query model, also for avoiding extra formalism introduced by channels, throughout the proofs we use an equivalent notion of distance given by Haah, Kothari, O'Donnell, and Tang in \cite{HaahKothariODonnellTang2023}. In the same work they have shown that this distance is equivalent to diamond norm of the induced channel up to universal constant. We use the direct isometric extension for the convenience in handling ancillas. 

\begin{definition}[Phase insensitive distance, adapted from \cite{HaahKothariODonnellTang2023}]
For isometries \(V,W:\mathcal K\to\mathcal L\), define
\begin{equation}
    d_{ph}(V,W)
    :=
    \min_{\omega\in\mathbb T}\opnorm{V-\omega W}.
    \label{eq:phase-distance}
\end{equation}
The minimum is attained because \(\mathbb T\) is compact.
\end{definition}

This distance forms a pseudometric, it's associated to the diamond norm of the induced channel. 

\begin{lemma}[Relation with diamond norm of the induced channel, \cite{HaahKothariODonnellTang2023}]
\label{lem:relationdiamond}
For isometries $V,W:\mathcal K\to\mathcal L$, define the induced channels $\mathcal E_V(\rho)=V\rho V^\dagger$ and $\mathcal E_W(\rho)=W\rho W^\dagger$. The unitary comparison in \cite[Proposition~1.6]{HaahKothariODonnellTang2023} extends to isometries:
\[
    \frac12\dnorm{\mathcal E_V-\mathcal E_W}
    \leq d_{ph}(V,W)
    \leq \dnorm{\mathcal E_V-\mathcal E_W}.
\]
These two distance are equivalent up to universal constants.

\end{lemma}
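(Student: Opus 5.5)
The plan is to prove the two inequalities separately. Throughout we use the invariance $\mathcal E_{\omega W}=\mathcal E_W$ for every $\omega\in\mathbb T$. We also use the contraction $M:=W^\dagger V$ on $\mathcal K$, which satisfies $\opnorm{M}\le1$ since $V,W$ are isometries.

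\emph{First inequality.} Fix $\omega\in\mathbb T$ and any state $\rho$ on $\mathcal K\otimes\mathcal R$, with the identity on $\mathcal R$ suppressed. Decompose
\[
V\rho V^\dagger-(\omega W)\rho(\omega W)^\dagger=(V-\omega W)\rho V^\dagger+\omega W\rho(V-\omega W)^\dagger .
\]
Hölder's inequality $\|XY\|_1\le\opnorm{X}\|Y\|_1$ bounds each term in trace norm by $\opnorm{V-\omega W}$. Taking the supremum over $\rho$ and then the minimum over $\omega$ gives $\dnorm{\mathcal E_V-\mathcal E_W}\le 2\,d_{ph}(V,W)$.

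\emph{Second inequality.} Let $F=\{\langle x|M|x\rangle:\|x\|=1\}$ be the numerical range of $M$. By Toeplitz--Hausdorff, $F$ is a convex subset of the closed unit disk. For a unit vector $x$ we have $\|(V-\omega W)x\|^2=2-2\,\mathrm{Re}(\bar\omega\langle x|M|x\rangle)$, and therefore
\[
d_{ph}(V,W)^2=\min_{\omega\in\mathbb T}\sup_{z\in F}\bigl(2-2\,\mathrm{Re}(\bar\omega z)\bigr).
\]
Next we lower bound the diamond norm. Take pure inputs $|\psi\rangle\in\mathcal K\otimes\mathcal R$ and write $|\psi\rangle=\sum_j\sqrt{p_j}\,|x_j\rangle|j\rangle$ with unit vectors $x_j$. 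The overlap of the two output pure states is then $\langle\psi|(M\otimes I)|\psi\rangle=\sum_j p_j\langle x_j|M|x_j\rangle$. The trace distance of pure states is $\|\,|a\rangle\langle a|-|b\rangle\langle b|\,\|_1=2\sqrt{1-|\langle a|b\rangle|^2}$, so
\[
\dnorm{\mathcal E_V-\mathcal E_W}\ge 2\sqrt{1-m^2},\qquad m:=\operatorname{dist}(0,\overline{F}).
\]
Here we used that convex combinations of points of $F$, and limits of them, are realized as such overlaps.

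\emph{Comparison.} If $m=0$, the bound $\opnorm{V-\omega W}\le2$ gives $d_{ph}\le 2\le\dnorm{\mathcal E_V-\mathcal E_W}$. If $m>0$, let $z_0\in\overline F$ be the point nearest the origin and set $\omega=z_0/|z_0|$. By the supporting hyperplane property of the convex set $\overline F$, $\mathrm{Re}(\bar\omega z)\ge m$ for all $z\in F$. Hence $d_{ph}^2\le 2-2m=2(1-m)\le4(1-m)(1+m)=4(1-m^2)$, which combined with the diamond-norm bound gives $d_{ph}(V,W)\le\dnorm{\mathcal E_V-\mathcal E_W}$.

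The main obstacle is this second inequality. The delicate points are realizing every point of the convex hull of the numerical range as an entangled-input overlap, which is exactly what the reference system $\mathcal R$ provides, and choosing the phase $\omega$ via the supporting hyperplane at the nearest point.
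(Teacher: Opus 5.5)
Your proof is correct, and it does more than the paper does here. The paper gives no argument of its own: it cites \cite[Proposition~1.6]{HaahKothariODonnellTang2023}, which concerns unitaries, and simply asserts that the comparison extends to isometries. That extension is not automatic. For unitaries, $U^\dagger V$ is normal, so both distances are functions of the shortest arc containing its spectrum, and the comparison reduces to trigonometry. For isometries, $M=W^\dagger V$ can be an arbitrary contraction (every contraction dilates to such a pair), and its numerical range may be strictly larger than the convex hull of its spectrum, so that shortcut disappears. This non-normal case is exactly the one the paper uses, since $(JU^t)^\dagger\mathcal A_UJ=U^{-t}J^\dagger\mathcal A_UJ$ is a compressed block. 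Your route works directly with the numerical range $F$ of $M$: the identity $\|(V-\omega W)x\|^2=2-2\operatorname{Re}(\bar\omega\langle x|M|x\rangle)$, the pure-state lower bound $2\sqrt{1-m^2}$, and the nearest-point choice of $\omega$. It needs only $\|Vx\|=\|Wx\|=1$ and Toeplitz--Hausdorff, so it justifies the isometric extension with the stated constants $\tfrac12$ and $1$.

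Two minor points. First, the diamond norm is a supremum over all $X$ with $\|X\|_1\le1$, not only over states, but your H\"older estimate applies verbatim to such $X$. Second, the entangled reference system you call the delicate point is not actually needed. Toeplitz--Hausdorff already makes $F$ convex, and each point of $F$ is attained by a single vector $x$ with trivial reference. Conversely, you could drop Toeplitz--Hausdorff and use the reference to realize $\operatorname{conv}F$, running the supporting-hyperplane step on $\overline{\operatorname{conv}F}\supseteq F$. Either ingredient suffices; you do not need both.
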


We define the isometry \(J\) that records the embedding of the original state into the space where the algorithm work.
\begin{equation}
 J:\cH_S\longrightarrow\cH_{\mathrm{A}}\otimes\cH_S,
 \qquad J\ket\psi:=\ket0_{A}\otimes\ket\psi,
 \label{eq:embedding}
\end{equation}
where $\ket0_A$ is a fixed normalized ancilla state. The map $\cA_UJ$ is the output, and $JU^t$ is the desired approximation target. Comparing them requires the ancillas to return approximately to their initialized state, as well as requiring the correct action on $S$.

\begin{figure}[htbp]
\centering

\begin{quantikz}[
    row sep={0.85cm,between origins},
    column sep=0.65cm
]
    & \gate[wires=2]{J}\wireoverride{n}
    & \rstick{$\ket{0}_A$}
    \\
    \lstick{$\ket{\psi}_S$}
    &
    & \rstick{$\ket{\psi}_S$}
\end{quantikz}
\quad$\equiv$\quad

\begin{quantikz}[
    row sep={0.85cm,between origins},
    column sep=0.65cm
]
    \lstick{$\ket{0}_A$}
    &
    & \rstick{$\ket{0}_A$}
    \\
    \lstick{$\ket{\psi}_S$}
    &
    & \rstick{$\ket{\psi}_S$}
\end{quantikz}

\caption{Ancilla preparation as an isometry.
}
\label{fig:fqp-ancilla-isometry}
\end{figure}

\begin{definition}[\((t,\delta,\eps)\)-success and query complexity in phase insensitive distance]\label{def:success}
A circuit is $(t,\delta,\eps)$-successful in fractional query problem if
\begin{equation}
 \sup_{\substack{H=H^\dagger\\\spec(H)\subset[\delta,2\pi)}}
  d_{ph}\bigl(\cA_{e^{iH}}J,Je^{itH}\bigr)\leq\eps.
 \label{eq:success-condition}
\end{equation}
Equivalently, for every promised $H$ there is a phase $\omega_H\in\T$ such that
\begin{equation}
 \opnorm{\cA_{e^{iH}}J-\omega_HJe^{itH}}\leq\eps.
 \label{eq:success-condition-phase}
\end{equation}
The phase may depend on $H$, but it must be common to all input states. The quantity $Q(t,\delta,\eps)$ is the smallest $N\in\N$ for which such a circuit exists. Figure~\ref{fig:fqp-coherent-success} illustrates this success criterion.
\end{definition}

By Lemma~\ref{lem:relationdiamond}, the success criterion in Definition~\ref{def:diamondsuccess} and in Definition~\ref{def:success} are equivalent up to universal constants.

\begin{figure}[htbp]
\centering

\begin{minipage}[c]{0.43\linewidth}
\centering
\textbf{Implemented algorithm}
$\mathcal A_UJ$\\[12pt]

\begin{quantikz}[
    row sep={0.85cm,between origins},
    column sep=0.55cm
]
    \lstick{$\ket{0}_A$}
    & \gate[wires=2]{\mathcal A_U}
    & \rstick{$A$}
    \\
    \lstick{$\ket{\psi}_S$}
    &
    & \rstick{$S$}
\end{quantikz}
\end{minipage}
\hfill
\begin{minipage}[c]{0.08\linewidth}
\centering
$\displaystyle\simeq_{\varepsilon}$
\end{minipage}
\hfill
\begin{minipage}[c]{0.43\linewidth}
\centering
\textbf{Target}
$JU^t$\\[12pt]

\begin{quantikz}[
    row sep={0.85cm,between origins},
    column sep=0.55cm
]
    \lstick{$\ket{0}_A$}
    &
    & \rstick{$\ket{0}_A$}
    \\
    \lstick{$\ket{\psi}_S$}
    & \gate{U^t}
    & \rstick{$U^t\ket{\psi}_S$}
\end{quantikz}
\end{minipage}

\caption{Successful approximation. The symbol $\simeq_{\varepsilon}$ denotes operator norm error at most $\varepsilon$ after minimization over a global phase.
}
\label{fig:fqp-coherent-success}
\end{figure}

\section{QSVT upper bound construction in our model}
\label{app:upper}

In this section we present that the QSVT construction fits our model and yields a joint upper bound. Which we will later show is optimal in fractional query problem.

\begin{proof}[Proof of Theorem~\ref{thm:qsvt-upper}]

Since QSVT paper uses a slightly different definition of branch cut and its gap, we here show that our model includes their implementation under a shift.

Given \(U\) as defined in our setting, we put $\alpha=\pi+\delta/2$, $H'=H-\alpha I$, and $U'=e^{-i\alpha}U$. Then we have a shifted version of the unitary and the Hermitian that satisfies the QSVT's setting:
\begin{equation}
 U'=e^{iH'},\qquad
 \spec(H')\subset[-\pi+\delta/2,\pi-\delta/2),
 \qquad \opnorm{H'}\leq\pi-\delta/2.
\end{equation}

\begin{figure}[htbp]
\centering

\begin{tikzpicture}[
  font=\small,
  >=Latex,
  allowed/.style={blue!65!black,line width=1.5pt},
  gap/.style={orange!80!black,densely dashed,line width=1pt}
]

\begin{scope}[shift={(0,0)}]
  \draw[black!20] (0,0) circle (1.2);
  \draw[allowed]
    (44:1.2) arc[start angle=44,end angle=360,radius=1.2];
  \draw[gap]
    (0:1.2) arc[start angle=0,end angle=44,radius=1.2];

  \fill[blue!65!black] (44:1.2) circle (2pt);
  \draw[blue!65!black,fill=white] (0:1.2) circle (2pt);

  \node[above right] at (44:1.2) {$e^{i\delta}$};
  \node[below right] at (0:1.2) {$1$};
  \node[orange!80!black] at (20:1.65) {$\delta$};

  \node at (0,1.8) {$U=e^{iH}$};
  \node at (0,-1.7) {$\theta\in[\delta,2\pi)$};
\end{scope}

\draw[->] (2.3,0)--(5.0,0);
\node[above=6pt] at (3.65,0) {$\times\,e^{-i\alpha}$};
\node[below=6pt] at (3.65,0)
  {$\theta\mapsto\theta-\alpha$};

\begin{scope}[shift={(7.3,0)}]
  \draw[black!20] (0,0) circle (1.2);
  \draw[allowed]
    (-158:1.2) arc[start angle=-158,end angle=158,radius=1.2];
  \draw[gap]
    (158:1.2) arc[start angle=158,end angle=202,radius=1.2];

  \fill[blue!65!black] (-158:1.2) circle (2pt);
  \draw[blue!65!black,fill=white] (158:1.2) circle (2pt);

  \node[left=5pt] at (-1.2,0) {$-1$};
  \node at (0,1.8) {$U'=e^{i(H-\alpha I)}$};
  \node at (0,-1.7)
    {$\theta-\alpha\in[-\pi+\delta/2,\pi-\delta/2)$};
\end{scope}
\end{tikzpicture}

\caption{
The shift centers the allowed eigenphase interval at zero. Solid arcs indicate allowed eigenphases; dashed arcs indicate the excluded gap of width $\delta$. The operators $U'$ and $U$ agree up to global phase and induce the same unconditioned channel. Their fractional targets also differ only by a common phase.
}
\label{fig:qsvt-spectral-shift}
\end{figure}

For given parameter \((t,\delta,\eps)\in(0,1)\), QSVT fractional query construction \cite[Corollary~72]{GSLW19} and its following extension gives a unitary circuit $\cA_U$ made of controlled-\(U'\) and its inverse satisfying
\begin{equation}
 \opnorm{J^\dagger\cA_UJ-e^{itH'}}\leq\frac{\eps^2}{2},
 \qquad
 N=O\!\left(\frac1\delta\log\frac1\eps\right).
\end{equation}
This includes the amplification to a unit normalized block and the uncomputation of phase estimation. The circuit uses $N$ controlled- \((U')^{\pm1}\) queries and all other gates depend only on $(t,\delta,\eps)$.

For either query sign,
\begin{equation}
 \cO_{U'}^{(\sigma)}
 =\bigl(e^{-i\sigma\alpha}\ket0\!\bra0_C
          +\ket1\!\bra1_C\bigr)\cO_U^{(\sigma)},
 \qquad \sigma\in\{-1,+1\}.
\end{equation}

Each shifted query costs one allowed query. Absorbing the known phase gates and control register swaps into the adjacent query independent $W_j$ gives our query model.

\begin{figure}[htbp]
\centering

\begin{quantikz}[row sep=0.4cm,column sep=0.4cm]
 \lstick{$C_j$} & \octrl{2}          & \qw \\
 \lstick{$C$}   & \qw               & \qw \\
 \lstick{$S$}   & \gate{(U')^\sigma} & \qw
\end{quantikz}
\quad$=$\quad
\begin{quantikz}[row sep=0.4cm,column sep=0.4cm]
 \lstick{$C_j$} & \swap{1} & \qw             & \qw             & \swap{1} & \qw \\
 \lstick{$C$}   & \targX{} & \octrl{1}       & \gate{P_\sigma} & \targX{} & \qw \\
 \lstick{$S$}   & \qw      & \gate{U^\sigma} & \qw             & \qw      & \qw
\end{quantikz}

\caption{
Implementation of a shifted query. Where \(P_\sigma
 =e^{-i\sigma\alpha}|0\rangle\!\langle0|
   +|1\rangle\!\langle1|,
\) and \(\sigma\in\{-1,+1\}\).}
\label{fig:qsvt-query-transfer}
\end{figure}

To check that this algorithm satisfies the \((t,\delta,\eps)\)-success criterion, let $\ket\psi$ be any unit vector. Unitarity gives
\begin{equation}
\begin{aligned}
     \bigl\|\cA_UJ\ket\psi-Je^{itH'}\ket\psi\bigr\|_2^2
     &=2-2\operatorname{Re}
       \bra\psi e^{-itH'}J^\dagger\cA_UJ\ket\psi\\
     &\leq 2\opnorm{J^\dagger\cA_UJ-e^{itH'}}
     \leq\eps^2.
\end{aligned}
\end{equation}
Since $e^{itH'}=e^{-it\alpha}U^t$, we conclude that
\begin{equation}
 d_{ph}(\cA_UJ,JU^t)
 \leq\opnorm{\cA_UJ-e^{-it\alpha}JU^t}\leq\eps.
\end{equation}
The estimate is uniform over all promised $H$.
\end{proof}

\section{Proofs of the main results}
\label{sec:lower_bound_proof}
We first use a hard subfamily to reduce the fractional query problem to a scalar polynomial approximation problem, then prove the \(\Omega_{\tau}\left(\frac{1}{\delta}log\frac{1}{\eps}\right)\) lower bound of the fractional query problem. Therefore, we conclude that the optimal query complexity for fractional query problem is \(\Theta_{\tau}\left(\frac{1}{\delta}log\frac{1}{\eps}\right)\). In the end of the section we also present a proof for \(\Omega_{\tau}\left(log\frac{1}{\eps}\right)\) lower bound, uniform to \(\delta \in (0,1)\), with a different method.

\subsection{Reduction to a scalar polynomial approximation problem}

Quantum circuits constructed with finite query of the variations of one single unitary or a fixed set of unitaries have some kind of finite frequency polynomial structure. This idea is investigated under many settings, spanning almost all the field~\cite{Beals2001Quantum,LowChuang2017Optimal,GSLW19}. Problems involving approximation with quantum circuits can sometimes be transformed into classical polynomial approximation problem, and we could often pick up various tools from classical approximation theory. QSP based methods specify a form of circuit where the polynomial structure is explicit and use this viewpoint to construct circuits for specific tasks, and the analysis revolving around it provided many informing results regarding query complexity of quantum algorithms ~\cite{LowChuang2017Optimal,Haah2019Product,MW24}. 

For our purpose, we reduce the approximation problem by considering a hard subfamily where there is exploitable polynomial structure and provide mechanisms obtaining lower bounds. The essential obstruction is the rigidity of the finite frequency set imposed by finitely many oracle calls. We first reduce from the approximation problem over the oracle family \(U=e^{iH}\), where \(H=H^\dagger\) and \(\spec(H)\subset[\delta,2\pi)\) to a hard subfamily indexed by a parameter \(\theta\) that has polynomial structure in this parameter and require more queries. If an algorithm success uniformly for all oracle family in Definition~\ref{def:fractional-query}, then it has to success at least for this hard subfamily.

We construct this hard subfamily artificially. Consider the following two eigenphase family with one fixed reference phase. Fix a projection \(P=\ket1\!\bra1\) and a unit vector \(\ket0\perp\ket1\), and set
\begin{equation}
  H_{\theta}=\delta I_S+(\theta-\delta)P,  \qquad
  \theta\in[\delta,2\pi).
\end{equation}
So that
\begin{equation}
  U_{\theta}= e^{iH_{\theta}}=e^{i\delta}(I_S-P)+e^{i\theta}P,
  \qquad
  \theta\in[\delta,2\pi).
  \label{eq:scalar-instances}
\end{equation}
This family satisfies all promises defined in Definition~\ref{def:fractional-query} because \(H_\theta=diag(\delta,\theta,\delta, \cdots,\delta)\). Since the original problem requires uniform success over the whole oracle family, it is at least as hard as this restriction.

\begin{lemma}[Reduction to a scalar Laurent polynomial approximation]
\label{lem:query-polynomial}
Let \(\cA_{U_{\theta}}\) be an \((t,\delta,\eps)\)-success \(N\)-query circuit satisfying Eq.~\eqref{eq:success-condition} for approximating \(U_\theta\). Then there exists a Laurent polynomial
\begin{equation}
  p(z)=\sum_{k=-2N}^{2N}a_kz^k
  \label{eq:p-polynomial}
\end{equation}
such that
\begin{equation}
  \abs{p(e^{i\theta})-e^{it\theta}}
  \leq2\eps,
  \qquad
  \theta\in[\delta,2\pi).
  \label{eq:scalar-approximation}
\end{equation}
\end{lemma}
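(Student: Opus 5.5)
We prove the lemma by expanding the circuit as a Laurent polynomial in \(z=e^{i\theta}\). We then extract two scalar amplitudes, one for each eigenvector of \(U_\theta\), and form a product in which the unknown global phase cancels.

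\emph{Step 1: polynomial structure.} For \(U_\theta=e^{i\delta}(I_S-P)+zP\) with \(z=e^{i\theta}\), each controlled query
\[
\mathcal O^{(\sigma)}_{U_\theta}=|0\rangle\!\langle0|_C\otimes\bigl(e^{i\sigma\delta}(I_S-P)+z^{\sigma}P\bigr)+|1\rangle\!\langle1|_C\otimes I_S
\]
is an operator of the form \(X_\sigma+z^{\sigma}Y\), with fixed operators \(X_\sigma,Y\). Substituting into Eq.~\eqref{eq:n-query-circuit} shows that \(\mathcal A_{U_\theta}=\sum_{k=-N}^{N}z^kM_k\) for operators \(M_k\) independent of \(\theta\). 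Hence every matrix element of \(\mathcal A_{U_\theta}J\), and in particular
\[
f_0(z):=\langle 0_A,0|\mathcal A_{U_\theta}|0_A,0\rangle,\qquad f_1(z):=\langle 0_A,1|\mathcal A_{U_\theta}|0_A,1\rangle,
\]
is a Laurent polynomial of degree at most \(N\).

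\emph{Step 2: use the success criterion on both eigenvectors with a common phase.} By Eq.~\eqref{eq:success-condition-phase} there is an \(\omega_\theta\in\mathbb T\), common to all inputs, such that \(\|\mathcal A_{U_\theta}J-\omega_\theta JU_\theta^t\|_{\mathrm{op}}\le\eps\). The vectors \(|0\rangle,|1\rangle\) are eigenvectors of \(U_\theta^t\) with eigenvalues \(e^{it\delta}\) and \(e^{it\theta}\); this uses the principal branch \(e^{itH_\theta}\), which is well defined on \([\delta,2\pi)\). Taking matrix elements therefore gives
\[
|f_0(z)-\omega_\theta e^{it\delta}|\le\eps,\qquad |f_1(z)-\omega_\theta e^{it\theta}|\le\eps .
\]

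\emph{Step 3: cancel the phase.} The phase \(\omega_\theta\) depends on \(\theta\) in an uncontrolled, possibly discontinuous way, and this is the main obstacle. It is removed by the product \(\overline{f_0}f_1\). On \(\mathbb T\) we have \(\overline{z^k}=z^{-k}\), so \(\overline{f_0(z)}\) is again a Laurent polynomial of degree at most \(N\). Consequently \(g(z):=\overline{f_0(z)}f_1(z)\) extends to a Laurent polynomial of degree at most \(2N\). Using \(|f_0|,|f_1|\le1\) (they are matrix elements of a unitary) and \(|\omega_\theta|=1\),
\[
|g(z)-e^{-it\delta}e^{it\theta}|\le|\overline{f_0}|\,|f_1-\omega_\theta e^{it\theta}|+|\overline{f_0}-\overline{\omega_\theta}e^{-it\delta}|\,|\omega_\theta e^{it\theta}|\le2\eps .
\]
As the lemma notes, \(\overline{f_0}f_1\) is exactly the matrix element \(\langle 0|\,\cdot\,|1\rangle\to\langle0|\,\cdot\,|1\rangle\) of the induced channel, which is why no phase information survives.

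\emph{Step 4: normalize.} Set \(p(z):=e^{it\delta}g(z)\). Since multiplication by a unimodular constant preserves both the degree and the error, \(p\) has the form \eqref{eq:p-polynomial} and satisfies Eq.~\eqref{eq:scalar-approximation} for all \(\theta\in[\delta,2\pi)\).

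The only delicate point is Step 3: the error of the product must be controlled by the triangle inequality with the unit-modulus bounds, rather than by trying to choose \(\omega_\theta\) continuously in \(\theta\).
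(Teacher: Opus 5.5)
Your proposal is correct and follows the paper's proof essentially step for step. You expand the circuit as a degree-$N$ operator-valued Laurent polynomial in $z$, take the diagonal amplitudes on $J|0\rangle$ and $J|1\rangle$ with the common phase $\omega_\theta$, and cancel that phase via $p=e^{it\delta}\,\overline{f_0}\,f_1$ with a triangle-inequality bound of $2\varepsilon$; the only difference is that you split the product error using $|f_0|\le 1$ where the paper uses $|b|=|f_1|\le 1$, which is immaterial.
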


\phantomsection\label{proof:reduction}
\begin{proof}[Proof for Lemma~\ref{lem:query-polynomial}]
Let \(z=e^{i\theta}\), and consider the two eigenphase family in Eq.~\eqref{eq:scalar-instances}. We want to write the algorithm \(\cA_{U_{\theta}}\) as a operator valued Laurent polynomial in \(z\). We first decompose the controlled oracle into a sum of \(z\)-dependent term and \(z\)-independent term. The \(j\)-th controlled query has the form
\begin{equation}
  \mathcal O_{U_{\theta}}^{(\sigma_j)}
  =z^{\sigma_j}Q_1+Q_0^{(\sigma_j)},
\end{equation}
where
\begin{equation}
  Q_1=\ket0\!\bra0_C\otimes P,
  \qquad
  Q_0^{(\sigma_j)}
  =\ket0\!\bra0_C\otimes e^{i\sigma_j\delta}(I_S-P)
   +\ket1\!\bra1_C\otimes I_S,
\end{equation}
and both operators are independent of \(z\). Every monomial obtained by expanding Eq.~\eqref{eq:n-query-circuit} has exponent equal to a sum of at most \(N\) numbers from \(\{-1,+1\}\), and hence lies between \(-N\) and \(N\). Therefore
\begin{equation}
  \cA_{U_{\theta}}=\cA(z)=\sum_{k=-N}^{N}z^kB_k,
  \label{eq:operator-polynomial}
\end{equation}
with \(z\)-independent operator coefficients \(B_k\).

Write \(\ket{v_j}=J\ket j\) for \(j\in\{0,1\}\), and define
\begin{equation}
    a(z):=\langle v_0\vert\cA(z)\vert v_0\rangle,
    \qquad
    b(z):=\langle v_1\vert\cA(z)\vert v_1\rangle.
\end{equation}
Both \(a\) and \(b\) have Laurent support contained in \(\{-N,\ldots,N\}\). By success, for every \(\theta\in[\delta,2\pi)\) there exists \(\omega_\theta\in\mathbb T\) such that
\begin{equation}
    \abs{a(e^{i\theta})-\omega_\theta e^{it\delta}}\leq\eps,
    \qquad
    \abs{b(e^{i\theta})-\omega_\theta e^{it\theta}}\leq\eps.
    \label{eq:two-amplitude-approximation}
\end{equation}
Since
\begin{equation}
\begin{aligned}
\left|a(e^{i\theta})-\omega_\theta e^{it\delta}\right|
&=\left|\langle v_0|E_\theta|0\rangle\right|
 \leq \|v_0\|_2\,\|E_\theta|0\rangle\|_2
 \leq\varepsilon,\\
\left|b(e^{i\theta})-\omega_\theta e^{it\theta}\right|
&=\left|\langle v_1|E_\theta|1\rangle\right|
 \leq \|v_1\|_2\,\|E_\theta|1\rangle\|_2
 \leq\varepsilon.
\end{aligned}
\end{equation}

If \(a(z)=\sum_{k=-N}^{N}\alpha_kz^k\), define the complex conjugate
\begin{equation}
    a^{\sharp}(z):=\sum_{k=-N}^{N}\overline{\alpha_k}z^{-k},
    \qquad
    p(z):=e^{it\delta}a^{\sharp}(z)b(z).
    \label{eq:phase-cancelled-polynomial}
\end{equation}
Then \(p\) has Laurent support in \(\{-2N,\ldots,2N\}\), and on the unit circle \(a^{\sharp}(z)=\overline{a(z)}\). Since \(\cA(z)\) is unitary for \(\abs z=1\), we have \(\abs{b(z)}\leq1\). Thus Eq.~\eqref{eq:two-amplitude-approximation} gives
\begin{equation}
\begin{aligned}
    \abs{p(e^{i\theta})-e^{it\theta}}
    &=
    \abs{\overline{a(e^{i\theta})}b(e^{i\theta})-e^{it(\theta-\delta)}}\\
    &\leq
    \abs{a(e^{i\theta})-\omega_\theta e^{it\delta}}\,
    \abs{b(e^{i\theta})}
    +
    \abs{b(e^{i\theta})-\omega_\theta e^{it\theta}}\\
    &\leq2\eps.
\end{aligned}
\end{equation}
The product in Eq.~\eqref{eq:phase-cancelled-polynomial} cancels the unknown common phase \(\omega_\theta\), proving Eq.~\eqref{eq:scalar-approximation}. Equivalently, \(\overline{a(z)}b(z)\) is a matrix coefficient of the induced unitary channel acting on \(\ket{v_1}\!\bra{v_0}\), so the global phase disappears and the Laurent degree doubles from \(N\) to \(2N\).
\end{proof}

Set
\begin{equation}
    q(\theta):=p(e^{i\theta})
    =\sum_{k=-2N}^{2N}a_ke^{ik\theta}.
    \label{eq:q-polynomial}
\end{equation}
It has at most \(4N+1\) integer frequencies, all contained in \(\{-2N,-2N+1,\ldots,2N\}\). Moreover, \(q\) is a trigonometric polynomial.

\subsection{Proof of the joint lower bound}

The idea of periodic mismatch lower bound is inherited from the periodic mismatch mechanism introduced by Liu and Pira for studying the depth error tradeoff of fixed data re-uploading circuits~\cite[Secs.~III.3 and IV.2]{LiuPira2026}. In both settings discussed here and there, approximation task is required on a smaller local interval, and the uniform approximation error is lower bounded by the obstruction over a larger interval by a Remez type inequality. The fundamental obstruction being used is that the target violates the periodicity of the trigonometric polynomial approximants.

We introduce the Sharp Remez inequality from \cite{TikhonovYuditskii2020SharpRemez}. It gives us the suitable tool to compare the smallness in the local region with the mismatch happened at periodicity.

\begin{theorem}[Sharp trigonometric Remez inequality, adapted from
{\cite[Theorem~1.1]{TikhonovYuditskii2020SharpRemez}}]
\label{thm:sharp-remez}
Let \(d\in\mathbb N\), \(0<s<2\pi\), and let
\[
    r(\theta)=\sum_{k=-d}^{d}c_ke^{ik\theta},
    \qquad c_k\in\mathbb C.
\]
For every measurable \(E\subset[0,2\pi)\) with
\(|E|\geq2\pi-s\),
\begin{equation}
    \sup_{\theta\in[0,2\pi)}|r(\theta)|
    \leq
    \mathsf T_{2d}\!\left(\sec\frac{s}{4}\right)
    \sup_{\theta\in E}|r(\theta)|,
    \label{eq:sharp-remez}
\end{equation}
where \(\mathsf T_m\) is the Chebyshev polynomial of the first kind,
characterized by \(\mathsf T_m(\cos u)=\cos(mu)\).
The constant is sharp.
\end{theorem}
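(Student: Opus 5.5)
We would treat the statement as the trigonometric analogue of the classical algebraic Remez inequality and prove it by a reduction to the real case followed by a Chebyshev comparison argument. The comparison is against the extremal function
\[
    \mathsf T_{2d}\!\left(\frac{\cos(\theta/2)}{\cos(s/4)}\right).
\]
This is an even polynomial of degree \(2d\) in \(\cos(\theta/2)\). It is therefore a real cosine polynomial of degree \(d\) in \(\theta\), and so lies in the admissible class. It is bounded by \(1\) on the arc \(\{|\theta-\pi|\le \pi-s/2\}\), which has measure \(2\pi-s\), and it equals \(\mathsf T_{2d}(\sec(s/4))\) at \(\theta=0\). This already shows that the constant cannot be improved, so the work lies in the upper bound.

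\emph{Step 1 (reduction to real coefficients).} We may assume \(\sup_E|r|<\infty\). By continuity and compactness the supremum over \([0,2\pi)\) is attained at some \(\theta_0\), and by translation invariance of both sides we may take \(\theta_0=0\). Multiply \(r\) by a unimodular constant so that \(r(0)=\|r\|_\infty>0\), and set \(u:=\operatorname{Re} r\). Then \(u\) is a real trigonometric polynomial of degree at most \(d\). It satisfies \(u(0)=\|r\|_\infty\) and \(|u|\le|r|\) on \(E\). It therefore suffices to prove
\[
    u(0)\le \mathsf T_{2d}(\sec(s/4))\,\sup_E|u|
\]
for real \(u\).

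\emph{Step 2 (normalisation and rearrangement).} Normalise so that \(\sup_E|u|=1\). Then \(|E\cap\{|u|\le1\}|\ge 2\pi-s\), so the superlevel set \(\{|u|>1\}\) has measure at most \(s\). Suppose, for contradiction, that \(u(0)>\mathsf T_{2d}(\sec(s/4))\). Following Erdélyi's treatment of the real case, we would replace \(u\) by a suitable rearrangement. The goal is to push the set \(\{|u|>1\}\) into a single arc \(|\theta|<s/2\) around \(0\) while keeping a degree-\(d\) trigonometric polynomial. The tool we would use is the standard trick of counting oscillations of \(u\) between the levels \(\pm1\): a real trigonometric polynomial of degree \(d\) has at most \(2d\) zeros, counted with multiplicity, on the circle.

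\emph{Step 3 (comparison).} Scale the extremal function so that it takes the value \(u(0)\) at \(0\). Then consider
\[
    D(\theta)=\lambda\,\mathsf T_{2d}\!\left(\frac{\cos(\theta/2)}{\cos(s/4)}\right)-u(\theta),
\]
with \(\lambda<1\) chosen so that \(D(0)=0\). On the arc where the Chebyshev function equioscillates, it attains the values \(\pm1\) alternately at \(2d+1\) points. Since \(\lambda<1\) and \(|u|\le 1\) off a set of measure at most \(s\), \(D\) changes sign between consecutive equioscillation points. Together with the double zero forced at \(0\), this gives more than \(2d\) zeros of the degree-\(d\) polynomial \(D\), which is the desired contradiction.

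\emph{Main obstacle.} We expect the hardest part to be making Steps 2 and 3 rigorous for an arbitrary measurable \(E\) rather than an arc. The difficulty is that the sign changes of \(D\) must be located using only a measure condition, not a location condition. Our first attempt would be a distribution-function argument. We would show that for each level \(y>1\), the measure \(|\{|u|>y\}|\) is at least the corresponding measure for the extremal function, using the zero count on each interval between consecutive extremal points; this is the Erdélyi/Remez ``measure of level sets'' lemma. If that proves too delicate, we would fall back on the conformal-map/comb argument of Tikhonov and Yuditskii.
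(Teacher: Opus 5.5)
The paper gives no proof of this statement. It imports it from Tikhonov--Yuditskii, whose Theorem~1.1 is stated for algebraic polynomials of degree $n$ on $\mathbb T$; the adaptation is to apply it to $P(\zeta)=\zeta^{d}\sum_{k=-d}^{d}c_k\zeta^{k}$, which has degree $2d$ and satisfies $|P(e^{i\theta})|=|r(\theta)|$. Your sharpness example is correct, and so is Step~1, passing to $u=\operatorname{Re} r$ after translating and rotating so that $r(0)=\|r\|_\infty$. The upper bound, however, has a genuine gap.

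Step~3 needs $|u(\theta_j)|\le 1$ at the $2d+1$ alternation points $\theta_j\in[s/2,2\pi-s/2]$ of $\mathsf T_{2d}(\cos(\theta/2)/\cos(s/4))$. The hypothesis is only a measure condition, which gives no pointwise information. $E^c$ can have measure up to $s$ and contain every $\theta_j$, and it need not be an arc centred at the maximiser.

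There is also a sign slip. Under your contradiction hypothesis the scaling factor is $\lambda=u(0)/\mathsf T_{2d}(\sec(s/4))>1$, and it is $\lambda>1$ that makes $D(\theta_j)$ alternate. With $\lambda<1$ the sign claim fails even at points of $E$. With that correction, Step~3 is a valid Chebyshev comparison ($2d$ sign changes plus the double zero at $0$), but only when $E$ contains all the $\theta_j$, e.g.\ $E\supseteq[s/2,2\pi-s/2]$.

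Reducing an arbitrary measurable $E$ to that configuration is the whole difficulty, and Step~2 names no mechanism for it. The elementary reduction one would try is symmetrisation, $v(\theta)=\tfrac12\bigl(u(\theta)+u(-\theta)\bigr)$. But $v$ is controlled only on $E\cap(-E)$, which has measure $\ge 2\pi-2s$. It therefore yields at best $\mathsf T_{2d}(\sec(s/2))$, roughly doubling the exponent.

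The distribution-function lemma you propose does not help either: letting $y\downarrow1$, it already contains the theorem, so it restates the problem rather than reducing it. Getting the sharp constant for non-even polynomials and arbitrary $E$ is exactly the substance of the Tikhonov--Yuditskii argument you list as a fallback. The only complete route in your proposal is therefore that citation, which is what the paper does.
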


\phantomsection\label{proof:remez}
\begin{proof}[Proof of Theorem~\ref{thm:remez-lower}]
By Lemma~\ref{lem:query-polynomial},
\(q(\theta):=p(e^{i\theta})\) is a trigonometric polynomial of degree
at most \(2N\) satisfying
\[
    |q(\theta)-e^{it\theta}|\leq2\eps,
    \qquad \delta\leq\theta<2\pi.
\]
Define
\begin{equation}
    r(\theta):=q(\theta+\delta)-e^{it\delta}q(\theta).
    \label{eq:remez-shifted-difference}
\end{equation}
This is again a trigonometric polynomial of degree at most \(2N\).
For \(\theta\in E:=[\delta,2\pi-\delta)\), both \(\theta\) and
\(\theta+\delta\) lie in the promised interval. Since
\(e^{it(\theta+\delta)}=e^{it\delta}e^{it\theta}\),
\begin{equation}
    \sup_{\theta\in E}|r(\theta)|\leq4\eps,
    \qquad |E|=2\pi-2\delta.
    \label{eq:remez-small-set}
\end{equation}

Continuity and periodicity of \(q\) imply
\[
    q(0)=q(2\pi),
    \qquad
    |q(0)-e^{2\pi it}|\leq2\eps,
\]
where the inequality follows by taking \(\theta\uparrow2\pi\).
No approximation promise at \(\theta=0\) is used. Together with
\(|q(\delta)-e^{it\delta}|\leq2\eps\), this gives
\begin{equation}
    \begin{aligned}
        |r(0)|
        &=|q(\delta)-e^{it\delta}q(0)|\\
        &\geq|e^{it\delta}(1-e^{2\pi it})|-4\eps\\
        &=2\sin(\pi t)-4\eps.
    \end{aligned}
    \label{eq:remez-branch-defect}
\end{equation}
Apply Theorem~\ref{thm:sharp-remez} with \(d=2N\) and \(s=2\delta\).
Equations~\eqref{eq:remez-small-set} and
\eqref{eq:remez-branch-defect} yield
\[
    2\sin(\pi t)-4\eps
    \leq |r(0)|
    \leq4\eps\,\mathsf T_{4N}\!\left(\sec\frac{\delta}{2}\right).
\]
Rearranging, we have

\begin{equation}
    \eps
    \geq
    \frac{\sin(\pi t)}
    {2\bigl(1+\mathsf T_{4N}(\sec(\delta/2))\bigr)}
    \label{eq:remez first part}
\end{equation}
To obtain the exponential form, put
\begin{equation}
    \begin{aligned}
        a_\delta
        &:=\operatorname{arcosh}\!\left(\sec\frac{\delta}{2}\right)\\
        &=\log\!\left(\sec\frac{\delta}{2}
                     +\tan\frac{\delta}{2}\right)
          =\int_0^{\delta/2}\sec u\,du
          \leq\delta,
    \end{aligned}
    \label{eq:remez-exponent}
\end{equation}
since \(0<\delta\leq1\) and \(\sec u\leq2\) on \([0,1/2]\).
Hence
\[
    \mathsf T_{4N}\!\left(\sec\frac{\delta}{2}\right)
    =\cosh(4Na_\delta)
    \leq e^{4N\delta}.
\]
Using \(1+e^{4N\delta}\leq2e^{4N\delta}\), we have
\begin{equation}
    \frac{\sin(\pi t)}
    {2\bigl(1+\mathsf T_{4N}(\sec(\delta/2))\bigr)}
    \geq
    \frac{\sin(\pi t)}{4}e^{-4N\delta}.
    \label{eqn:remezsecondpart}
\end{equation}
Combining Eq.(\ref{eq:remez first part}) and Eq.(\ref{eqn:remezsecondpart}), rearranging and taking logarithm we have
\begin{equation}
    N\geq\frac{1}{4\delta}
    \log\frac{\sin(\pi t)}{4\eps}.
    \label{eq:remez-query-lower}
\end{equation}

Finally, let \(s_\tau:=\sin(\pi\tau)\). If
\(t\in[\tau,1-\tau]\) and \(\eps\leq(s_\tau/4)^2\), then
\[
    \log\frac{\sin(\pi t)}{4\eps}
    \geq\log\frac{s_\tau}{4\eps}
    \geq\frac12\log\frac1\eps.
\]
Substitution into Eq.~\eqref{eq:remez-query-lower}, followed by
minimization over successful circuits, proves
Eq.~\eqref{eq:remez-uniform-lower}.
\end{proof}

\subsection{Proof of the optimal complexity}
\phantomsection\label{app:main}
\begin{proof}[Proof for Theorem~\ref{thm:optimality}]
Let an $N$-query circuit be $(t,\delta,\eps)$-successful.
By Theorem~\ref{thm:remez-lower},
\[
    N\geq\frac{1}{4\delta}
        \log\frac{\sin(\pi t)}{4\eps}.
\]
Since $t\in[\tau,1-\tau]$, we have $\sin(\pi t)\geq\sin(\pi\tau)$. Moreover, $\sin(\pi\tau)/4\geq\sqrt{\eps}$, and hence
\[
    \log\frac{\sin(\pi t)}{4\eps}
    \geq
    \log\frac{\sin(\pi\tau)}{4\eps}
    \geq\frac12\log\frac{1}{\eps}.
\]
Therefore every successful circuit satisfies
\[
    N\geq\frac{1}{8\delta}\log\frac{1}{\eps}.
\]
Minimizing over such circuits proves the lower bound in
Eq.~\eqref{eq:optimal-query-complexity}.

For the upper bound, observe that $\left(\frac{\sin(\pi\tau)}{4}\right)^2\leq1/16<1/2$.
Theorem~\ref{thm:qsvt-upper} therefore supplies a
$(t,\delta,\eps)$-successful circuit using at most
$C\frac{1}{\delta}\log(\frac{1}{\eps})$ queries, for a universal constant $C$.
Combining the two bounds proves the result.
\end{proof}

\section{Alternative proof for \(\eps\) lower bound}

We introduce a \(\Omega_{\tau}\left(log\frac{1}{\eps}\right)\)lower bound proof based on functional norm inequality. We will construct a normalized functional that annihilates the attainable frequency space but not the target, and use the functional norm inequality to estimate the error. 

All accessible frequencies lie in the finite set \(\{-2N,\dots,2N\}\), whose exponentials span a finite dimensional subspace of \(C(I)\). A functional that vanishes on this subspace but not on the target gives a lower bound on their uniform distance. 

\begin{theorem}[Functional norm inequality]
    A standard way to lower bound a linear approximation problem is to let \(\mathcal{F}\) be the closure of the span of the approximants and construct, for \(f\notin\mathcal F\), a nonzero linear functional \(\phi\) such that
    \begin{equation}
        \phi(f)\neq0,
        \qquad \phi(g)=0\quad\text{for every }g\in\mathcal F.
    \end{equation}
    Hahn--Banach theorem guarantees the existence of such a functional. Its norm inequality gives
    \begin{equation}
        \abs{\phi(f-g)}\leq\norm{\phi}\norm{f-g}_\infty
    \end{equation}
    so that
    \begin{equation}
        \norm{f-g}_\infty \geq \frac{\abs{\phi(f)}}{\norm{\phi}}
    \end{equation}
\end{theorem}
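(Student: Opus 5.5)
This is the standard duality statement: I will prove the norm inequality and then settle the existence claim via Hahn--Banach. Work in the Banach space \(C(I)\) with the sup norm, where \(\mathcal F\) is a closed subspace. In our application \(\mathcal F\) is the span of \(\{e^{ik\theta}\}_{|k|\le 2N}\), which is finite dimensional and hence automatically closed.

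\emph{Step 1: the inequality, for any annihilating functional.} Let \(\phi\) be bounded with \(\phi|_{\mathcal F}=0\) and \(\phi(f)\neq0\), and let \(g\in\mathcal F\). Linearity gives \(\phi(f-g)=\phi(f)-\phi(g)=\phi(f)\). The definition of the dual norm gives \(|\phi(f-g)|\le\|\phi\|\,\|f-g\|_\infty\). Since \(\phi\neq0\) we have \(\|\phi\|>0\), so we may divide, obtaining \(\|f-g\|_\infty\ge|\phi(f)|/\|\phi\|\). The bound is uniform in \(g\), so taking the infimum over \(g\in\mathcal F\) yields \(\operatorname{dist}(f,\mathcal F)\ge|\phi(f)|/\|\phi\|\).

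\emph{Step 2: existence.} Set \(d:=\operatorname{dist}(f,\mathcal F)\). Because \(\mathcal F\) is closed and \(f\notin\mathcal F\), we have \(d>0\). On \(\mathcal F\oplus\mathbb C f\), define \(\phi_0(g+\lambda f):=\lambda d\). This is well defined because \(f\notin\mathcal F\). It is bounded with norm at most \(1\): for \(\lambda\neq0\),
\[
\|g+\lambda f\|_\infty=|\lambda|\,\|f+g/\lambda\|_\infty\ge|\lambda|d .
\]
The complex Hahn--Banach theorem extends \(\phi_0\) to \(\phi\) on all of \(C(I)\) with \(\|\phi\|\le1\). This \(\phi\) vanishes on \(\mathcal F\) and satisfies \(\phi(f)=d\neq0\). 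As a byproduct, the bound of Step 1 is attained, so \(\operatorname{dist}(f,\mathcal F)=\max_\phi|\phi(f)|/\|\phi\|\).

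The only delicate point is making explicit the closedness of \(\mathcal F\). It is needed for \(d>0\), and it holds here by finite dimensionality.

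For the later application, the real obstacle is not this abstract statement but building an explicit \(\phi\), a ``frequency annihilator'', with computable \(\|\phi\|\) and \(|\phi(e^{it\theta})|\). I would take a measure that is a finite difference operator of order \(4N+1\), of the form \(\prod_{k=-2N}^{2N}(S_h-e^{ikh})\) applied as point evaluations. Here \(S_h\) is translation by \(h\), and the step \(h\) is chosen small enough that all evaluation points stay inside \([\delta,2\pi)\).

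- This operator kills every \(e^{ik\theta}\) with \(|k|\le 2N\).
- Applied to \(e^{it\theta}\), it gives \(\prod_k(e^{ith}-e^{ikh})\) in modulus, since the prefactor \(e^{it\theta}\) has modulus \(1\).
- Its norm is at most \(\prod_k 2\).

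The intended ratio is \(\prod_k |e^{ith}-e^{ikh}|/2\), and the task is to show it is at least \(t(1-t)\,c^{-4N}\). The \(k=0\) and \(k=1\) factors supply the \(t(1-t)\). The remaining factors are to be controlled by a Stirling-type estimate, which is where a constant like \(6e\) should appear. Once that bound holds, Lemma~\ref{lem:query-polynomial} converts it into \(N\gtrsim\log(t(1-t)/\eps)\).
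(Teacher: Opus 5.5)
Your proposal is correct and follows the paper's argument. Step 1 is exactly the norm inequality $|\phi(f)|=|\phi(f-g)|\le\|\phi\|\,\|f-g\|_\infty$ that the statement records, and Step 2 spells out the Hahn--Banach extension (with closedness of $\mathcal F$ giving $d>0$) that the paper only cites, while your sketch of the annihilator $\prod_{k=-2N}^{2N}(T_h-e^{ikh})$ matches the paper's later construction with $h=\pi/(4N+1)$.
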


Our proof constructs this functional explicitly as a normalized finite frequency annihilator that encode the query count implicitly, and then uses elementary estimation to obtain a lower bound with respect to \(\eps\).

\subsection{Constructing a frequency annihilator}

For \(h\in\mathbb R\), let \(T_h\) denote translation,
\begin{equation}
  (T_hf)(x)=f(x+h).
  \label{eq:translation}
\end{equation}
Every exponential mode is an eigenfunction:
\begin{equation}
  T_he^{i\lambda x}=e^{i\lambda h}e^{i\lambda x}.
  \label{eq:translation-eigen}
\end{equation}
This permits one polynomial of \(T_h\) to erase a set of frequencies by designing its roots.

\begin{theorem}[A frequency annihilator of the approximation space]
\label{thm:frequency annihilator-functional}
Let
\[
    \Lambda=\{\lambda_1,\ldots,\lambda_s\}\subset\mathbb R,
    \qquad
    \mathcal V_\Lambda
    :=
    \operatorname{span}
    \{e^{i\lambda x}:\lambda\in\Lambda\},
\]
where \(s=|\Lambda|\geq 1\). Choose \(h>0\) and \(x_0\in I\) such that
\[
    x_0,x_0+h,\ldots,x_0+sh\in I.
\]

Define the polynomial
\begin{equation}
    R(w)
    :=
    \prod_{\lambda\in\Lambda}
    \bigl(w-e^{i\lambda h}\bigr)
    =
    \sum_{j=0}^{s}r_jw^j,
    \label{eq:filter-polynomial}
\end{equation}
and set
\begin{equation}
    \rho_R:=\sum_{j=0}^{s}|r_j|.
    \label{eq:filter-rho}
\end{equation}
Replacing \(w\) by \(T_h\) gives the linear operator
\[
    R(T_h)
    =
    \prod_{\lambda\in\Lambda}
    \bigl(T_h-e^{i\lambda h}\mathrm{Id}\bigr)
    =
    \sum_{j=0}^{s}r_jT_h^j.
\]
Thus
\[
    (R(T_h)f)(x)
    =
    \sum_{j=0}^{s}r_jf(x+jh).
\]
For \(f\in C(I)\), we use this formula only at points \(x\) for which all the required translates \(x,x+h,\ldots,x+sh\) lie in \(I\).

Then
\[
    R(T_h)q=0,
    \qquad q\in\mathcal V_\Lambda,
\]
at every point where the expression is defined. Moreover,
\begin{equation}
    \varphi_R(f)
    :=
    \frac{(R(T_h)f)(x_0)}{\rho_R}
    =
    \frac{1}{\rho_R}
    \sum_{j=0}^{s}r_jf(x_0+jh)
    \label{eq:filter-functional}
\end{equation}
is a bounded linear functional on \(C(I)\) satisfying
\[
    \|\varphi_R\|=1,
    \qquad
    \varphi_R(q)=0
    \quad\text{for every }q\in\mathcal V_\Lambda.
\]
For every \(\xi\in\mathbb R\), its value on an exponential is
\[
    \varphi_R\!\left(e^{i\xi(\cdot)}\right)
    =
    \frac{e^{i\xi x_0}}{\rho_R}R(e^{i\xi h}).
\]
Consequently, for every \(f\in C(I)\),
\begin{equation}
    \begin{aligned}
        \inf_{q\in\mathcal V_\Lambda}\|f-q\|_\infty
        &\geq |\varphi_R(f)|\\
        &=
        \frac{|(R(T_h)f)(x_0)|}{\rho_R}
        =
        \frac{
            \left|\sum_{j=0}^{s}r_jf(x_0+jh)\right|
        }{
            \sum_{j=0}^{s}|r_j|
        }.
    \end{aligned}
    \label{eq:exact-filter-general}
\end{equation}
\end{theorem}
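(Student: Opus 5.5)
The proof verifies the listed properties directly from the translation-eigenfunction identity \eqref{eq:translation-eigen}. Only elementary facts are needed; the one point requiring care is the norm of \(\varphi_R\).

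\textbf{Annihilation.} For each \(\lambda\in\Lambda\), apply the factor \(T_h-e^{i\lambda h}\mathrm{Id}\) to \(e^{i\lambda x}\). By \eqref{eq:translation-eigen} this gives \(e^{i\lambda h}e^{i\lambda x}-e^{i\lambda h}e^{i\lambda x}=0\). The factors are polynomials in the single operator \(T_h\), so they commute, and we may put the factor belonging to \(\lambda\) last. Hence \(R(T_h)e^{i\lambda(\cdot)}=0\) wherever the expansion \(\sum_j r_j f(x+jh)\) is defined. A cleaner way to say this uses the expanded form directly:
\[
\sum_j r_j e^{i\lambda(x+jh)}=e^{i\lambda x}R(e^{i\lambda h})=0,
\]
since \(e^{i\lambda h}\) is a root of \(R\). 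This avoids any domain issue from composing translations on a bounded interval. The same computation with \(\lambda\) replaced by an arbitrary \(\xi\), evaluated at \(x_0\) and divided by \(\rho_R\), gives the stated formula \(\varphi_R(e^{i\xi(\cdot)})=e^{i\xi x_0}R(e^{i\xi h})/\rho_R\). Linearity then extends the vanishing to all of \(\mathcal V_\Lambda\).

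\textbf{Boundedness and norm.} First, \(\rho_R\ge|r_s|=1\) because \(R\) is monic, so \(\varphi_R\) is well defined. The functional is a finite combination of point evaluations at the points \(x_0+jh\in I\). This gives the upper bound
\[
|\varphi_R(f)|\le \rho_R^{-1}\sum_j|r_j|\,\|f\|_\infty=\|f\|_\infty,
\]
so \(\|\varphi_R\|\le1\). For equality, the points \(x_0+jh\) are distinct because \(h>0\). I would take a continuous \(f\) on \(I\) with \(\|f\|_\infty\le1\) and \(f(x_0+jh)=\overline{r_j}/|r_j|\) whenever \(r_j\ne0\), for example a piecewise linear interpolant of these unimodular values, truncated radially to the unit disk. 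Then \(\varphi_R(f)=1\), so \(\|\varphi_R\|=1\). (Only \(\|\varphi_R\|\le1\) is needed for the lower bound.)

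\textbf{Distance bound.} For any \(q\in\mathcal V_\Lambda\) we have \(\varphi_R(q)=0\). Therefore
\[
|\varphi_R(f)|=|\varphi_R(f-q)|\le\|f-q\|_\infty,
\]
which is the functional norm inequality stated just before this theorem. Taking the infimum over \(q\) and substituting the explicit form of \(\varphi_R\) yields \eqref{eq:exact-filter-general}.

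The only point needing care is making sure every evaluation happens inside \(I\), which the hypothesis on \(x_0,\dots,x_0+sh\) guarantees. Everything else is routine.
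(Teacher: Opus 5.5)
Your proposal is correct and follows essentially the same route as the paper: the expanded identity $\sum_j r_j e^{i\xi(x+jh)}=e^{i\xi x}R(e^{i\xi h})$ gives both the annihilation and the formula on exponentials, the triangle inequality gives $\|\varphi_R\|\le 1$, a piecewise-linear interpolant of the phases $\overline{r_j}/|r_j|$ gives equality, and $|\varphi_R(f)|=|\varphi_R(f-q)|\le\|f-q\|_\infty$ gives the distance bound. Your radial truncation is harmless but unnecessary: linear interpolation between points of the closed unit disk stays in the disk by convexity, which is what the paper uses (it also sets the unspecified values at $r_j=0$ to $0$).
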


\begin{proof}
Since \(r_s=1\), we have \(\rho_R>0\).

The reason for our choice of \(R\) is that translation diagonalizes the functions of the form \(e^{i\xi x}\) ,
\[
    T_he^{i\xi(\cdot)}
    =
    e^{i\xi h}e^{i\xi(\cdot)}.
\]
More explicitly,
\[
    \begin{aligned}
        \bigl(R(T_h)e^{i\xi(\cdot)}\bigr)(x)
        &=
        \sum_{j=0}^{s}r_je^{i\xi(x+jh)}\\
        &=
        e^{i\xi x}
        \sum_{j=0}^{s}r_j(e^{i\xi h})^j\\
        &=
        e^{i\xi x}R(e^{i\xi h}).
    \end{aligned}
\]
When \(\xi\in\Lambda\), the last expression is zero because \(e^{i\xi h}\) is a root of \(R\). By linearity, \(R(T_h)\) therefore sends every \(q\in\mathcal V_\Lambda\) to zero. Evaluating at \(x_0\) and dividing by \(\rho_R\) also gives the stated formula for \(\varphi_R\) on exponentials.

The formula defining \(\varphi_R\) is a finite linear combination of point evaluations, so \(\varphi_R\) is linear. The triangle inequality gives
\[
    \begin{aligned}
        |\varphi_R(f)|
        &\leq
        \frac{1}{\rho_R}
        \sum_{j=0}^{s}|r_j|\,|f(x_0+jh)|\\
        &\leq
        \frac{1}{\rho_R}
        \sum_{j=0}^{s}|r_j|\,\|f\|_\infty\\
        &=
        \|f\|_\infty.
    \end{aligned}
\]
Hence \(\|\varphi_R\|\leq 1\).

To show that the norm is exactly one, choose a continuous function \(g\) with
\[
    g(x_0+jh)
    =
    \begin{cases}
        \overline{r_j}/|r_j|, & r_j\neq 0,\\
        0, & r_j=0.
    \end{cases}
\]
Such a function with \(\|g\|_\infty\leq 1\) can be constructed by linear interpolation between consecutive sample points and constant extension to the rest of \(I\). The interpolated values stay in the closed unit disk. For this function,
\[
    \varphi_R(g)
    =
    \frac{1}{\rho_R}\sum_{j=0}^{s}|r_j|
    =
    1.
\]
Therefore \(\|\varphi_R\|=1\).

Finally, since \(\varphi_R(q)=0\) for every
\(q\in\mathcal V_\Lambda\),
\[
    |\varphi_R(f)|
    =
    |\varphi_R(f-q)|
    \leq
    \|f-q\|_\infty.
\]
Taking the infimum over \(q\in\mathcal V_\Lambda\) proves
\eqref{eq:exact-filter-general}.
\end{proof}

\subsection{\(\eps\) lower bound from frequency annihilator}

\phantomsection\label{proof:freqlower}

\begin{proof}[Proof of Theorem~\ref{thm:fd-lower}]

We want to show that every \((t,\delta,\eps)\)-successful \(N\)-query circuit satisfies
\begin{equation}
    \eps
    \geq
    \frac12
    \prod_{k=-2N}^{2N}
    \sin\!\left(
        \frac{\pi\abs{k-t}}{2(4N+1)}
    \right).
    \label{eq:fd-product}
\end{equation}
For every \(N\in\mathbb{N}\), this implies
\begin{equation}
    \eps
    \geq
    \frac{t(1-t)}{2}(6e)^{-4N}.
    \label{eq:fd-geometric}
\end{equation}
Consequently,
\begin{equation}
    N 
    \geq
    \max\!\left\{
        0,
        \left\lceil
            \frac{\log(t(1-t)/(2\eps))}{4\log(6e)}
        \right\rceil
    \right\}.
\end{equation}

Let \(q\) be the Laurent polynomial from Lemma~\ref{lem:query-polynomial}. Equation~\eqref{eq:scalar-approximation} implies
\begin{equation}
    \abs{q(\theta)-e^{it\theta}}
    \leq2\eps,
    \qquad
    \theta\in[\delta,2\pi).
    \label{eq:shifted-approximation}
\end{equation}
Define
\begin{equation}
    h:=\frac{\pi}{4N+1},
    \qquad
    (T_hg)(\theta):=g(\theta+h),
\end{equation}
and let
\begin{equation}
    R(w):=\prod_{k=-2N}^{2N}\bigl(w-e^{ikh}\bigr)
    =\sum_{j=0}^{4N+1}r_jw^j.
    \label{eq:annihilator-poly}
\end{equation}
Each mode \(e^{ik\theta}\) is an eigenfunction of \(T_h\) with eigenvalue \(e^{ikh}\). Hence \(R\) acting on \(T_h\) defines a frequency annihilator
\begin{equation}
    R(T_h)q=0.
    \label{eq:annihilate-q}
\end{equation}
On the target,
\begin{equation}
    R(T_h)e^{it\theta}
    =R(e^{ith})e^{it\theta}.
    \label{eq:target-eigenmode}
\end{equation}

Put \(F=q-e^{it\theta}\). Since
\begin{equation}
    \delta+(4N+1)h=\delta+\pi<2\pi,
\end{equation}
all points evaluated by the frequency annihilator \(\delta,\delta+h,\ldots,\delta+(4N+1)h\) lie in the promised interval. Evaluating
\(
R(T_h) F=-R(e^{ith})e^{it\theta}
\)
at \(\theta=\delta\) gives
\begin{equation}
\begin{aligned}
    \abs{R(e^{ith})}
    &\leq
    \sum_{j=0}^{4N+1}\abs{r_j}\abs{ F(\delta+jh)}\\
    &\leq
    2\eps\sum_{j=0}^{4N+1}\abs{r_j}.
\end{aligned}
\label{eq:fd-l1-step}
\end{equation}
Because every root of \(R\) has modulus one, the coefficient \(\ell^1\)-norm satisfies
\begin{equation}
    \sum_{j=0}^{4N+1}\abs{r_j}
    \leq
    \prod_{k=-2N}^{2N}\bigl(1+\abs{e^{ikh}}\bigr)
    =2^{4N+1}.
    \label{eq:R-l1-bound}
\end{equation}
Moreover,
\begin{equation}
\begin{aligned}
    \abs{R(e^{ith})}
    &=
    \prod_{k=-2N}^{2N}\abs{e^{ith}-e^{ikh}}\\
    &=
    2^{4N+1}
    \prod_{k=-2N}^{2N}
    \sin\!\left(
      \frac{\pi\abs{k-t}}{2(4N+1)}
    \right).
\end{aligned}
\label{eq:R-target-product}
\end{equation}
Combining Eqs.~\eqref{eq:fd-l1-step}--\eqref{eq:R-target-product} proves Eq.~\eqref{eq:fd-product}.

Assume first that \(N\geq1\), and put \(M:=2N\). Every sine argument in Eq.~\eqref{eq:fd-product} belongs to \([0,\pi/2]\), so \(\sin x\geq2x/\pi\) gives
\begin{equation}
\begin{aligned}
    2\eps
    &\geq
    \frac{\prod_{k=-M}^{M}\abs{k-t}}{(2M+1)^{2M+1}}\\
    &\geq
    t(1-t)\frac{M!(M-1)!}{(2M+1)^{2M+1}}.
\end{aligned}
\label{eq:fd-factorial}
\end{equation}
Indeed, the factors at \(k=0\) and \(k=1\) are \(t\) and \(1-t\), the negative index factors have product at least \(M!\), and the remaining positive index factors have product at least \((M-1)!\). Using \(M!\geq(M/e)^M\), \(M/(2M+1)\geq1/3\), and \(M(2M+1)\leq4^M\), we obtain
\begin{equation}
\begin{aligned}
    \frac{M!(M-1)!}{(2M+1)^{2M+1}}
    &=
    \frac{(M!)^2}{M(2M+1)^{2M+1}}\\
    &\geq
    \frac{1}{e^{2M}M(2M+1)}
    \left(\frac{M}{2M+1}\right)^{2M}\\
    &\geq
    (6e)^{-2M}.
\end{aligned}
\end{equation}
Since \(M=2N\), this proves Eq.~\eqref{eq:fd-geometric} for \(N\geq1\). For \(N=0\), Eq.~\eqref{eq:fd-product} reads \(\eps\geq\tfrac12\sin(\pi t/2)\), and \(\sin(\pi t/2)\geq t\geq t(1-t)\). Thus the same geometric bound holds for all \(N\in\mathbb{N}\). Solving it for \(N\) proves Eq.~\eqref{eq:fd-query-lower}.
\end{proof}

\bibliography{references}

@article{Beals2001Quantum,
  author  = {Robert Beals and Harry Buhrman and Richard Cleve and Michele Mosca and Ronald de Wolf},
  title   = {Quantum Lower Bounds by Polynomials},
  journal = {Journal of the ACM},
  volume  = {48},
  number  = {4},
  pages   = {778--797},
  year    = {2001},
  doi     = {10.1145/502090.502097}
}

@article{Regula2021operational,
  doi = {10.22331/q-2021-08-09-522},
  url = {https://doi.org/10.22331/q-2021-08-09-522},
  title = {Operational applications of the diamond norm and related measures in quantifying the non-physicality of quantum maps},
  author = {Regula, Bartosz and Takagi, Ryuji and Gu, Mile},
  journal = {{Quantum}},
  issn = {2521-327X},
  publisher = {{Verein zur F{\"{o}}rderung des Open Access Publizierens in den Quantenwissenschaften}},
  volume = {5},
  pages = {522},
  month = aug,
  year = {2021}
}

@article{Haah2019Product,
  author  = {Jeongwan Haah},
  title   = {Product Decomposition of Periodic Functions in Quantum Signal Processing},
  journal = {Quantum},
  volume  = {3},
  pages   = {190},
  year    = {2019},
  doi     = {10.22331/q-2019-10-07-190},
  eprint  = {1806.10236},
  archivePrefix = {arXiv}
}

@inproceedings{HaahKothariODonnellTang2023,
   title={Query-optimal estimation of unitary channels in diamond distance},
   url={http://dx.doi.org/10.1109/FOCS57990.2023.00028},
   DOI={10.1109/focs57990.2023.00028},
   booktitle={2023 IEEE 64th Annual Symposium on Foundations of Computer Science (FOCS)},
   publisher={IEEE},
   author={Haah, Jeongwan and Kothari, Robin and O’Donnell, Ryan and Tang, Ewin},
   year={2023},
   month=Nov, pages={363–390}
}

@article{TikhonovYuditskii2020SharpRemez,
  author  = {Sergey Tikhonov and Peter Yuditskii},
  title   = {Sharp Remez Inequality},
  journal = {Constructive Approximation},
  volume  = {52},
  number  = {2},
  pages   = {233--246},
  year    = {2020},
  doi     = {10.1007/s00365-019-09473-2},
  eprint  = {1809.09726},
  archivePrefix = {arXiv}
}

@article{LowChuang2017Optimal,
  author  = {Guang Hao Low and Isaac L. Chuang},
  title   = {Optimal Hamiltonian Simulation by Quantum Signal Processing},
  journal = {Physical Review Letters},
  volume  = {118},
  pages   = {010501},
  year    = {2017},
  doi     = {10.1103/PhysRevLett.118.010501},
  eprint  = {1606.02685},
  archivePrefix = {arXiv}
}

@article{LYC16,
  author  = {Low, Guang Hao and Yoder, Theodore J. and Chuang, Isaac L.},
  title   = {Methodology of Resonant Equiangular Composite Quantum Gates},
  journal = {Physical Review X},
  volume  = {6},
  pages   = {041067},
  year    = {2016},
  doi     = {10.1103/PhysRevX.6.041067}
}

@article{LC19,
  author  = {Low, Guang Hao and Chuang, Isaac L.},
  title   = {{H}amiltonian Simulation by Qubitization},
  journal = {Quantum},
  volume  = {3},
  pages   = {163},
  year    = {2019},
  doi     = {10.22331/q-2019-07-12-163}
}

@inproceedings{GSLW19,
  author    = {Gily{\'e}n, Andr{\'a}s and Su, Yuan and Low, Guang Hao and Wiebe, Nathan},
  title     = {Quantum Singular Value Transformation and Beyond: Exponential Improvements for Quantum Matrix Arithmetics},
  booktitle = {Proceedings of the 51st Annual ACM SIGACT Symposium on Theory of Computing (STOC)},
  pages     = {193--204},
  year      = {2019},
  doi       = {10.1145/3313276.3316366},
  eprint    = {1806.01838},
  archivePrefix = {arXiv},
  note      = {Full version: arXiv:1806.01838}
}

@article{MRTC21,
  author  = {Martyn, John M. and Rossi, Zane M. and Tan, Andrew K. and Chuang, Isaac L.},
  title   = {Grand Unification of Quantum Algorithms},
  journal = {PRX Quantum},
  volume  = {2},
  pages   = {040203},
  year    = {2021},
  doi     = {10.1103/PRXQuantum.2.040203},
  eprint  = {2105.02859},
  archivePrefix = {arXiv}
}

@article{DLT22,
  author  = {Dong, Yulong and Lin, Lin and Tong, Yu},
  title   = {Ground-State Preparation and Energy Estimation on Early Fault-Tolerant Quantum Computers via Quantum Eigenvalue Transformation of Unitary Matrices},
  journal = {PRX Quantum},
  volume  = {3},
  pages   = {040305},
  year    = {2022},
  doi     = {10.1103/PRXQuantum.3.040305}
}

@article{MW24,
  author  = {Motlagh, Danial and Wiebe, Nathan},
  title   = {Generalized Quantum Signal Processing},
  journal = {PRX Quantum},
  volume  = {5},
  pages   = {020368},
  year    = {2024},
  doi     = {10.1103/PRXQuantum.5.020368},
  eprint  = {2308.01501},
  archivePrefix = {arXiv}
}

@misc{Aar06,
  author       = {Aaronson, Scott},
  title        = {The Ten Most Annoying Questions in Quantum Computing},
  howpublished = {Shtetl-Optimized (blog), \url{https://scottaaronson.blog/?p=112}},
  year         = {2006}
}

@article{SMM09,
  author  = {Sheridan, Lana and Maslov, Dmitri and Mosca, Michele},
  title   = {Approximating Fractional Time Quantum Evolution},
  journal = {Journal of Physics A: Mathematical and Theoretical},
  volume  = {42},
  number  = {18},
  pages   = {185302},
  year    = {2009},
  doi     = {10.1088/1751-8113/42/18/185302},
  eprint  = {0810.3843},
  archivePrefix = {arXiv}
}

@inproceedings{CGMSY09,
  author    = {Cleve, Richard and Gottesman, Daniel and Mosca, Michele and Somma, Rolando D. and Yonge-Mallo, David},
  title     = {Efficient Discrete-Time Simulations of Continuous-Time Quantum Query Algorithms},
  booktitle = {Proceedings of the 41st Annual ACM Symposium on Theory of Computing (STOC)},
  pages     = {409--416},
  year      = {2009},
  doi       = {10.1145/1536414.1536471},
  eprint    = {0811.4428},
  archivePrefix = {arXiv}
}

@inproceedings{BCCKS14,
  author    = {Berry, Dominic W. and Childs, Andrew M. and Cleve, Richard and Kothari, Robin and Somma, Rolando D.},
  title     = {Exponential Improvement in Precision for Simulating Sparse {H}amiltonians},
  booktitle = {Proceedings of the 46th Annual ACM Symposium on Theory of Computing (STOC)},
  pages     = {283--292},
  year      = {2014},
  doi       = {10.1145/2591796.2591854},
  eprint    = {1312.1414},
  archivePrefix = {arXiv}
}

@article{BunKothariThaler2020Polynomial,
  author  = {Bun, Mark and Kothari, Robin and Thaler, Justin},
  title   = {The Polynomial Method Strikes Back:
             Tight Quantum Query Bounds via Dual Polynomials},
  journal = {Theory of Computing},
  volume  = {16},
  number  = {10},
  pages   = {1--71},
  year    = {2020},
  doi     = {10.4086/toc.2020.v016a010}
}

@article{BACS07,
  author  = {Berry, Dominic W. and Ahokas, Graeme and Cleve, Richard and Sanders, Barry C.},
  title   = {Efficient Quantum Algorithms for Simulating Sparse {H}amiltonians},
  journal = {Communications in Mathematical Physics},
  volume  = {270},
  number  = {2},
  pages   = {359--371},
  year    = {2007},
  doi     = {10.1007/s00220-006-0150-x}
}

@inproceedings{BCK15,
  author    = {Berry, Dominic W. and Childs, Andrew M. and Kothari, Robin},
  title     = {{H}amiltonian Simulation with Nearly Optimal Dependence on All Parameters},
  booktitle = {IEEE 56th Annual Symposium on Foundations of Computer Science (FOCS)},
  pages     = {792--809},
  year      = {2015}
}

@inproceedings{MdW23,
  author    = {Mande, Nikhil S. and de Wolf, Ronald},
  title     = {Tight Bounds for Quantum Phase Estimation and Related Problems},
  booktitle = {31st Annual European Symposium on Algorithms (ESA 2023)},
  series    = {Leibniz International Proceedings in Informatics (LIPIcs)},
  volume    = {274},
  pages     = {81},
  year      = {2023},
  doi       = {10.4230/LIPIcs.ESA.2023.81}
}

@inproceedings{MS24,
  author    = {Montanaro, Ashley and Shao, Changpeng},
  title     = {Quantum and Classical Query Complexities of Functions of Matrices},
  booktitle = {Proceedings of the 56th Annual ACM Symposium on Theory of Computing (STOC)},
  pages     = {573--584},
  year      = {2024}
}

@article{AFCB14,
  author  = {Ara{\'u}jo, Mateus and Feix, Adrien and Costa, Fabio and Brukner, {\v{C}}aslav},
  title   = {Quantum Circuits Cannot Control Unknown Operations},
  journal = {New Journal of Physics},
  volume  = {16},
  number  = {9},
  pages   = {093026},
  year    = {2014},
  doi     = {10.1088/1367-2630/16/9/093026}
}

@article{QDSSM19a,
  author  = {Quintino, Marco T{\'u}lio and Dong, Qingxiuxiong and Shimbo, Atsushi and Soeda, Akihito and Murao, Mio},
  title   = {Reversing Unknown Quantum Transformations: Universal Quantum Circuit for Inverting General Unitary Operations},
  journal = {Physical Review Letters},
  volume  = {123},
  number  = {21},
  pages   = {210502},
  year    = {2019},
  doi     = {10.1103/PhysRevLett.123.210502}
}

@article{QDSSM19b,
  author  = {Quintino, Marco T{\'u}lio and Dong, Qingxiuxiong and Shimbo, Atsushi and Soeda, Akihito and Murao, Mio},
  title   = {Probabilistic Exact Universal Quantum Circuits for Transforming Unitary Operations},
  journal = {Physical Review A},
  volume  = {100},
  number  = {6},
  pages   = {062339},
  year    = {2019},
  doi     = {10.1103/PhysRevA.100.062339}
}

@article{YSM23,
  author  = {Yoshida, Satoshi and Soeda, Akihito and Murao, Mio},
  title   = {Reversing Unknown Qubit-Unitary Operation, Deterministically and Exactly},
  journal = {Physical Review Letters},
  volume  = {131},
  number  = {12},
  pages   = {120602},
  year    = {2023},
  doi     = {10.1103/PhysRevLett.131.120602}
}

@misc{GST24,
  author        = {Gavorov{\'a}, Zuzana and Seidel, Matan and Touati, Yonathan},
  title         = {Topological Obstructions to Quantum Computation with Unitary Oracles},
  year          = {2024},
  eprint        = {2011.10031},
  archivePrefix = {arXiv},
  primaryClass  = {quant-ph},
  note          = {arXiv:2011.10031v3}
}

@article{OYM25,
  author  = {Odake, Tatsuki and Yoshida, Satoshi and Murao, Mio},
  title   = {Analytical Lower Bound on Query Complexity for Transformations of Unknown Unitary Operations},
  journal = {Physical Review Letters},
  volume  = {135},
  number  = {23},
  pages   = {230603},
  year    = {2025},
  doi     = {10.1103/drp2-rzzw},
  eprint  = {2405.07625},
  archivePrefix = {arXiv}
}

@inproceedings{CYZ26,
  author    = {Chen, Kean and Yu, Nengkun and Zhang, Zhicheng},
  title     = {Approximation Does Not Help in Quantum Unitary Time-Reversal},
  booktitle = {Proceedings of the 58th Annual ACM Symposium on Theory of Computing (STOC)},
  pages     = {1376--1387},
  year      = {2026},
  doi       = {10.1145/3798129.3800847}
}

@article{Chiribella2008QuantumComb,
  author        = {Chiribella, Giulio and D'Ariano, Giacomo Mauro
                   and Perinotti, Paolo},
  title         = {Quantum Circuit Architecture},
  journal       = {Physical Review Letters},
  volume        = {101},
  number        = {6},
  pages         = {060401},
  year          = {2008},
  doi           = {10.1103/PhysRevLett.101.060401},
  eprint        = {0712.1325},
  archivePrefix = {arXiv},
  primaryClass  = {quant-ph}
}

@misc{LiuPira2026,
  author        = {Liu, Anthony Yuezhang and Pira, Lirand{\"e}},
  title         = {The Cost of Removing Tunability in Quantum Data Re-uploading},
  year          = {2026},
  eprint        = {2606.25598},
  archivePrefix = {arXiv},
  primaryClass  = {quant-ph}
}

@misc{DNSM19,
  author        = {Dong, Qingxiuxiong and Nakayama, Shojun and Soeda, Akihito and Murao, Mio},
  title         = {Controlled Quantum Operations and Combs, and Their Applications to Universal Controllization of Divisible Unitary Operations},
  year          = {2019},
  eprint        = {1911.01645},
  archivePrefix = {arXiv},
  primaryClass  = {quant-ph}
}

@misc{Kit95,
  author        = {Kitaev, A. Yu.},
  title         = {Quantum Measurements and the Abelian Stabilizer Problem},
  year          = {1995},
  eprint        = {quant-ph/9511026},
  archivePrefix = {arXiv}
}

@article{OKSM24,
  author  = {Odake, Tatsuki and Kristj{\'a}nsson, Hl{\'e}r and Soeda, Akihito and Murao, Mio},
  title   = {Higher-Order Quantum Transformations of {H}amiltonian Dynamics},
  journal = {Physical Review Research},
  volume  = {6},
  pages   = {L012063},
  year    = {2024},
  doi     = {10.1103/PhysRevResearch.6.L012063}
}

\end{document}